
 \documentclass[final,3p,times]{elsarticle}

\usepackage[english]{babel}
\usepackage[latin1]{inputenc}
\usepackage{amsfonts} 
\usepackage{amsmath}
\usepackage{amssymb}
\usepackage{amsthm}
\usepackage{enumerate}
\usepackage{hyperref} 
\usepackage{graphicx} 
\usepackage[usenames,dvipsnames]{color}
\usepackage{rotating}
\usepackage{epsfig} 
\usepackage{graphicx}
\usepackage{threeparttable}
\usepackage{multirow}
\usepackage{url}
\usepackage{booktabs} 
\usepackage{lineno}

\newtheorem{definition}{Definition}[section]

\newtheorem{theorem}[definition]{Theorem}
\newtheorem{lemma}[definition]{Lemma}





\journal{Journal  }

\begin{document}

\begin{frontmatter}


\title{A new approach for sizing trials with composite binary endpoints using anticipated marginal values and accounting for the correlation between components}



\author{Marta Bofill Roig}
\ead{marta.bofill.roig@upc.edu}
\author{Guadalupe G\'omez Melis}  
\ead{lupe.gomez@upc.edu}

\address{Departament d'Estad\'{i}stica i Investigaci\'{o} Operativa, Universitat Polit\`{e}cnica  de Catalunya, Barcelona, Spain}

\begin{abstract}
Composite binary endpoints are increasingly used as primary endpoints in clinical trials. When designing a trial, it is crucial to determine the appropriate sample size for testing the statistical differences between treatment groups for the primary endpoint. As shown in this work, when using a composite binary endpoint to size a trial, one needs to specify the event rates and the effect sizes of the composite components as well as the correlation between them. In practice, the marginal parameters of the components can be obtained from previous studies or pilot trials, 
however, 
the correlation is often not previously reported and thus usually unknown. We first show that the sample size for composite binary endpoints is strongly dependent on the correlation and, second, that slight deviations in  the prior information on the marginal parameters may result in underpowered trials for achieving the study objectives at a pre-specified significance level.

We propose a general strategy for calculating the required sample size when the correlation is not specified, 
and accounting 
for uncertainty in the marginal parameter values. We present the web platform CompARE to characterize composite endpoints and to calculate the sample size just as we propose in this paper. We evaluate the performance of the proposal with a simulation study, and illustrate it by means of a real case study using CompARE.
\end{abstract}

\begin{keyword}
Composite Binary Endpoints \sep Correlated Endpoints \sep Sample Size. 
\end{keyword}

\end{frontmatter}




\section{Introduction}\label{sec1}

Many trials are designed to evaluate more than one endpoint with the aim of providing a wider picture of the intervention effects \cite{FDA2017,Rosenblatt2017}. When the rate of occurrence of an event is expected to be low, it is common to consider the composite event   defined as the occurrence of any of a set of pre-specified events. 
This composite event is usually chosen as the primary efficacy endpoint for comparing two treatment groups, either by comparing proportions between groups at the end of the study or by using time-to-event analysis. In this paper, we focus on composite binary endpoints.

Power analysis  and its subsequent sample size calculation have been widely discussed in the literature on comparing two proportions in the univariate case \cite{Lachin1981,Donner1984,Fleiss1981,Friedman1981}. These standard sample size formulae are based on the effect size and the frequency of occurrence of primary endpoint, and they could be applied in a straightforward way to a composite endpoint if its effect size and frequency are known prior to the initiation of the study. However, the effect size and frequency of observing the composite endpoint depend on the corresponding effect and frequency of the composite components, which are often quite dissimilar and thus make the composite parameters  very difficult to anticipate.

The TACTICS-TIMI $18$ trial \cite{Cannon2001}   illustrates some problems that might arise when determining the sample size for a primary composite binary endpoint. TACTICS-TIMI 18 was an international, multicenter, randomized trial that evaluated the efficacy of invasive and conservative treatment strategies in patients with unstable angina or non-Q-wave acute myocardial infarction treated with tirofiban, heparin, and aspirin. The primary hypothesis of the TACTICS-TIMI 18 trial was that an early invasive strategy would reduce the combined incidence of death, acute myocardial infarction, and rehospitalization for acute coronary syndromes at six months when compared with an early conservative strategy. The primary endpoint was the composite endpoint formed by   death,    non-fatal myocardial infarction, and   rehospitalization for acute coronary syndrome at 6 months.

Similar research questions such as those in TACTICS-TIMI 18 were previously investigated in the TIMI IIIB and VANQWISH trials\cite{Cannon1998}. The TIMI IIIB trial \cite{TIMIIIIB} considered the primary composite endpoint of death, post-randomization myocardial infarction, and a positive exercise test at 6 weeks; whereas the primary endpoint in the VANQWISH trial \cite{VANQWISH}   was the combination of death and  non-fatal myocardial infarction at 12 months of follow-up.
The initial planning of TACTICS-TIMI $18$ was based on those trials expecting $22\%$ events  of the primary composite endpoint in the conservative-strategy group,  to detect a relative difference of $25\%$ between the two groups for a $80\%$ power. Those anticipated values resulted in the need to recruit at least $ 1720 $ patients. However, TACTICS-TIMI $ 18 $ yielded a $19\%$ frequency of observing the combination of death, acute myocardial infarction and rehospitalization at six months, which was remarkably lower than expected and delivered a relative difference of $20\%$ between groups, a figure that is seriously lower than the anticipated $25\%$. Note that if the anticipated frequency of observing the composite endpoint had been closer to the observed results, at least $ 2000 $ patients rather than $ 1720 $ would have been required and the sample size needed would have been larger than the one initially planned.

In this paper, we present sample size formulations for detecting a hypothesized difference between treatments in a primary composite binary endpoint based on the event rates and effect sizes of the composite components. The motivation for this is mainly because prior information on the marginal effects and event rates is commonly available from previous or pivotal studies, as illustrated in the TACTICS-TIMI $ 18 $ trial. Moreover, the major findings in a trial with a primary composite endpoint should be well supported by its components \cite{FDA2017, EMA2016}, since the trial could be considered negative if the components are not in line with the result \cite{Pocock2015, ICH9}. Nevertheless, as shown in this paper, the sample size calculation for composite endpoints relies not only on the anticipation of the effect size and the event rates of the composite components, but also on the correlation between them. However, even though the marginal parameters could be obtained previously, the correlation is usually not reported in practice and, thus, is frequently unknown and difficult to anticipate.

Several authors have addressed the correlation's influence on sample size determination when more than one endpoint is used as the primary endpoint. Sozu et al. \cite{Sozu2010} discuss several methods for calculating power and sample size for multiple co-primary binary endpoints, and they study the impact on the sample size, specifically regarding the association among endpoints. Senn and Bretz\cite{Senn2007} examine sample size for trials under different power definitions for multiple testing problems.
Rauch and Kieser \cite{Rauch2012} and Sander et al. \cite{Sander2016} define a multiple test procedure focused on a composite binary endpoint and a pre-specified main component, and propose an internal pilot study for estimating the unknown parameters and revising the sample size.
However, to the best of our knowledge, methodologies are limited in regard to handling the sample size calculation for composite binary endpoints when the correlation is unknown.

The aim of this paper is two-fold. First, we focus on providing a general procedure for sizing trials with composite binary endpoints, doing so on the basis of anticipated information of the composite components even if the correlation is unknown. We show that the sample size for composite binary endpoints is strongly dependent on the correlation, and that slight deviations in the prior information on the marginal parameters may result in trials being too underpowered for achieving the study objectives at the pre-specified significance level. We propose a sample size strategy to calculate the minimum sample size that guarantees the planned power while accounting for, on the one hand, the uncertainty of the correlation value and, on the other, plausible deviations in the marginal parameter values. Second, we present CompARE, a freely available web-based tool for characterizing binary composite endpoints and computing the needed sample size under several settings. CompARE provides aids to help understand the role played by each one of the components of the composite endpoint, as well as their consequences on the required sample size. The methodologies presented in this paper have been implemented in CompARE.

This paper is structured as follows. 
In Section \ref{Section.Notation}, we introduce the settings of the problem and the CompARE web tool. In Section \ref{Section.SampleSize}, we review sample size planning when evaluating risk difference.  In Section \ref{Section.ConvenientSampleSize}, we present sample size formulae for composite binary endpoints based on the parameters of the components plus the correlation. We further describe the performance of these formulae according to the parameters and propose a strategy for sizing trials when the correlation is unknown.
In Section \ref{Section.casestudy},   we exemplify the proposal by the TACTICS-TIMI $ 18 $ trial using CompARE, and in Section \ref{Section.extension} we extend the proposal to those trials for which the treatment effect is measured by the relative risk or odds ratio.  In Section \ref{Section.simulations}, we investigate the performance of the power and significance level under misspecification of the correlation and evaluate the proposed sample size strategy with a simulation study. We conclude the paper with the Discussion.


\section{Notation, assumptions and CompARE} \label{Section.Notation}

We consider a randomized clinical trial comparing two treatment groups: the control group ($ i=0 $) and treatment group ($ i=1 $), each  one composed of $n^{(i)}$ patients who are followed for a pre-specified time $\tau$. 
For simplicity, we consider only two events of potential interest, $\varepsilon_1$ and $\varepsilon_2$. Let $X_{ijk}$ denote the response of the $k$-th binary endpoint for the $j$-th patient in the $i$-th group of treatment ($ i=0,1 $, $j=1,...,n^{(i)}$, $k=1,2$).
The response  $X_{ijk}$ is defined by $1$ if the event, $\varepsilon_k$, has occurred during the follow-up  and $0$ otherwise.   

We define the binary  composite endpoint  as the event that occurs whenever one of the endpoints is observed, that is, $\varepsilon_* = \varepsilon_1 \cup \varepsilon_2$.   At this point we assume that the composite endpoint is well-defined, that is, both composite components are important enough to be considered; and we include those adverse clinical outcomes that are relevant to the clinical setting. 
We denote by $X_{ij*}$ the composite response defined as a Bernoulli
random variable with  probability of observing the event \\
$p^{(i)}_{*}= \mathrm{P}(X_{ij*} =1) =1-
q_*^{(i)}  $, where:
\begin{eqnarray} \label{def.CBE} 
	X_{ij*} &=&
	\begin{cases}
		1, \text{ if } X_{ij1} + X_{ij2} \geq 1 \\
		0, \text{ if else } X_{ij1} + X_{ij2} = 0
	\end{cases}
\end{eqnarray}

To evaluate whether there is  a risk reduction in the treatment group compared with the control group, we set a hypothesis test where the null hypothesis states that there is no difference between the control and the treatment groups; whereas the alternative hypothesis assumes a risk reduction  in the treatment group.  
The usual measures to evaluate the treatment effect when comparing two groups are   
the difference in proportions (also called risk difference), denoted by $\delta_*$; the relative risk (or risk ratio), $\textrm{R}_*$; and the  odds ratio, $\textrm{OR}_*$.  
The relationship between these measures and the probabilities of observing the binary composite endpoint in each group are given in Table \ref{Table.TreatmentMeasures}, together with the null and alternative hypothesis that should be set in each case.
The following sections will be developed in terms of the risk difference $\delta_* = p_*^{(1)} - p_*^{(0)}$ of the composite binary endpoint.  Section \ref{Section.extension} extends the results to the relative risk and odds ratio.

\begin{table}[h!] 
	\centering
	\caption[position=above]{Parameter to anticipate the effect, and set of hypotheses. }
	\begin{tabular}{cccc}
		\toprule 
		&  \textbf{Parameter Effect} & \textbf{Null hypothesis} & \textbf{Alternative hypothesis}   \\
		\toprule  
		Risk difference & $\delta_* = p_*^{(1)} - p_*^{(0)}$  & $\delta_* = 0$  & $\delta_* < 0 $  \\   
		Relative risk & $\textrm{R}_* = p_*^{(1)}/p_*^{(0)}$  & $ \log(\textrm{R}_*) = 0 $ & $ \log(\textrm{R}_*) < 0 $ \\ 
		Odds ratio & $\textrm{OR}_* = \frac{p_*^{(1)}/q_*^{(1)}}{p_*^{(0)}/q_*^{(0)}}$  & $ \log(\textrm{OR}_*) = 0 $ & $ \log(\textrm{OR}_*) < 0 $ \\ 
		\bottomrule
	\end{tabular} 
	\label{Table.TreatmentMeasures}
\end{table}

\subsection{An insight into the parameters of the composite endpoint}  

Let  $p_k^{(i)}$ and $q_k^{(i)}$ represent the probabilities that $\varepsilon_k$ occurs or not, respectively, for a  patient belonging to the $i$-th group. Let $\rho^{(i)}$ denote Pearson's correlation coefficient between the components in the $i$-th group.
The probability of observing the composite event $\varepsilon_*$ is in terms of the probabilities of $\varepsilon_1$ and $\varepsilon_2$ and the correlation, as follows:
\begin{eqnarray} \label{probCBE}
	p^{(i)}_{*} &=& 1-  q_1^{(i)} q_2^{(i)} - \rho^{(i)}   \sqrt{p_1^{(i)} p_2^{(i)} q_1^{(i)} q_2^{(i)}} \ , \hspace{8mm} i=0,1
\end{eqnarray}
Note here that the probability of observing the composite endpoint becomes smaller as the correlation between the components of the composite increases. 

The effect size in the composite endpoint  in terms  of the risk difference,  $\delta_*$, is given by: 
\begin{eqnarray} \label{effectCBE.riskdiff}
	\delta_* = \delta_1 q_2^{(0)} + \delta_2 q_1^{(0)}  - \delta_1 \delta_2 
	+ \rho^{(0)} \sqrt{p_1^{(0)}p_2^{(0)}q_1^{(0)}q_2^{(0)}} - \rho^{(1)}\sqrt{(p_1^{(0)}+\delta_1)(p_2^{(0)}+\delta_2)(q_1^{(0)}-\delta_1)(q_2^{(0)}-\delta_2)}  
\end{eqnarray}
where  $\delta_k$ ($k=1,2$) corresponds to the  risk
difference for  each of its components.

From now on the correlation is assumed equal for both groups and denoted by $\rho$, that is, $\rho=\rho^{(0)}=\rho^{(1)}$.
Let $\theta$ denote the vector of  event rates of the composite components in the control group, that is, $ \theta= (p_1^{(0)}, p_2^{(0)})$, and let $\lambda$ represent the vector of marginal effect sizes, that is, $\lambda=(\delta_1, \delta_2)$. We will denote the risk difference as a function of the marginal parameters  ($\theta,\lambda$)  and the correlation $\rho$ by $\delta_*(\theta,\lambda,\rho)$; and the probability of observing $\varepsilon_*$ under the control group by $p^{(0)}_{*}(\theta, \rho)$. 
We remark here that	when  $\lambda$ and $\theta$ are fixed such that $p_k^{(0)}<0.5$ and $\delta_k<0$ ($k=1,2$), 
the risk difference $\delta_*(\theta,\lambda,\rho)$ increases with respect to the correlation $\rho$ (see Appendix \ref{app.treateffect}).

\subsection{CompARE} \label{Section.CompARE} 

We present CompARE\footnote{Link to CompARE: \url{https://cinna.upc.edu/compare/}, the open-source code for CompARE is available at: \url{https://github.com/MartaBofillRoig/CompARE}}, an open-source and completely free web platform that can be used as a tool for clinicians, medical researchers and statisticians to compute the sample size according to the procedure proposed in this paper. Furthermore, CompARE can be used to:
\begin{enumerate}
	\item Determine the sample size for different situations, among them, when the correlation is not known.
	\item Specify the treatment effect for the composite endpoint based on the marginal information of the composite components, and to study the performance of the composite parameters according to them.
	\item Calculate and interpret the measures of association among the composite components, then investigate their characteristics.
	\item Choose the best primary endpoint to lead the trial. CompARE computes the Asymptotic Relative Efficiency method\cite{GomezLagakos, BofillGomez}, which quantifies differences in the efficiency of using -- as the primary endpoint -- a composite endpoint over one of its components.
\end{enumerate} 
Figure \ref{fig:compare}  summarizes all the capabilities of CompARE. To use CompARE, the least you should provide is the effect size and event rates of the composite components as well as the correlation.

\begin{figure}[h!]
	\centering
	\includegraphics[width=1\linewidth]{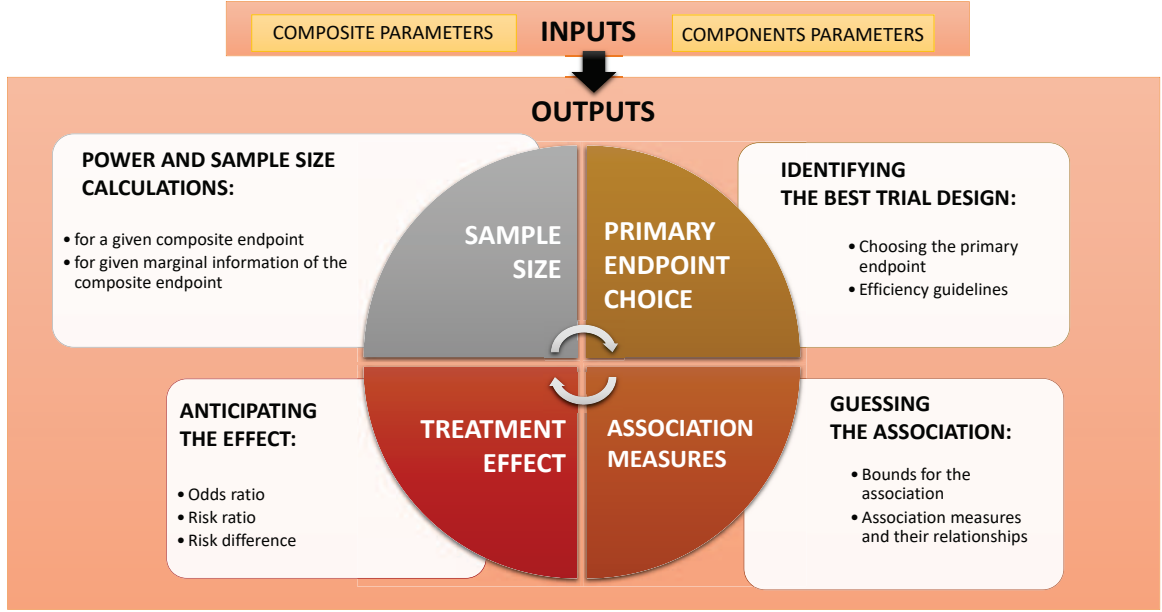}
	\caption{Scheme of the main lines of action in CompARE. Link to the CompARE homepage: \url{http://cinna.upc.edu/compare/}. Open-source code is available at: \url{https://github.com/MartaBofillRoig/CompARE}.}
	\label{fig:compare}
\end{figure}


\section{Sample Size when the parameters of the composite endpoint  can be anticipated} \label{Section.SampleSize}

In this section we summarize the statistics and sample size formulae to test for a risk difference when the probability of occurrence in the control group  of the composite binary endpoint can be  anticipated and  for a given expected 
risk difference. Since the composite endpoint is  an univariate outcome, a single statistical test is performed and, consequently, no multiplicity problem occurs and no statistical adjustment is needed. Therefore, as we will see,  the formulas follow the univariate case and are straightforward but to make the paper comprehensive and the following sections meaningful, we displayed them in terms of the composite endpoint parameters.

Herein we assume a clinical trial  where, first, patients are randomized to one of two treatment arms following a balanced design  and, second,   where the  primary endpoint is a  binary composite endpoint. The aim is to detect a hypothesized  risk reduction in the primary composite endpoint at the  significance level of $\alpha$ and with desired power  equal to $1-\beta$. Let $n$ be the total sample size  required, with $n^{(i)}=n\big/2$ patients per group ($i=0,1$);  and let us denote by    $z_\alpha$ and $z_\beta$   the values of standardized normal deviates corresponding to  $\alpha$ and $\beta$.

The null hypothesis  is stated as $\textrm{H}_0^*:  p_*^{(1)}-
p_*^{(0)}=0$ and  is compared against the alternative hypothesis $\textrm{H}_1^*:
p_*^{(1)}- p_*^{(0)}<0$.  
To test  $\textrm{H}_0^*$ against  $\textrm{H}_1^*$ we use the statistic:
\begin{eqnarray} \label{test.difprop.pooled}
	\textrm{T}_{*,n} &=&  \frac{\widehat{p}_*^{(1)} - \widehat{p}_*^{(0)}}{\sqrt{\widehat{Var}(\widehat{p}_*^{(1)} - \widehat{p}_*^{(0)})}} 
\end{eqnarray}
where $\widehat{p}_*^{(i)} =  \frac{1}{n^{(i)}} \sum_{j=1}^{n^{(i)}} X_{ij*}$. Under $\textrm{H}_0^*$, $\textrm{T}_{*,n}$ follows, asymptotically, the standard normal distribution. We will  reject the null hypothesis at the $\alpha$ level of significance if $\textrm{T}_{*,n} <- z_\alpha$.
The   variance $Var(\widehat{p}_*^{(1)} - \widehat{p}_*^{(0)})$  in equation \eqref{test.difprop.pooled} can be estimated  under  $\textrm{H}_0^*$ using the    pooled variance estimate\cite{Donner1984}:
$$\widehat{Var}_{H_0}(\widehat{p}_*^{(1)} - \widehat{p}_*^{(0)}) =  
\frac{1 }{2n^{(0)}} \cdot\left(  \widehat{p}_*^{(0)} + \widehat{p}_*^{(1)} \right) \cdot \left(  \widehat{q}_*^{(0)} + \widehat{q}_*^{(1)} \right)    $$
or under  $\textrm{H}_1^*$ using the unpooled variance estimate: 
$$\widehat{Var}_{H_1}(\widehat{p}_*^{(1)} - \widehat{p}_*^{(0)}) =   \frac{1}{n^{(0)}  }  \left( \widehat{p}_*^{(0)} \widehat{q}_*^{(0)} + \widehat{p}_*^{(1)} \widehat{q}_*^{(1)} \right)   $$ 

For a given probability under control group $p_*^{(0)}$, the  required sample size  using the pooled estimate to have power $1-\beta$ in order to detect an effect size of  $\delta_*$ at a significance level $\alpha$ is given by \cite{Lachin1981,Fleiss1981}: 
\begin{eqnarray} \label{samplesize.p.pooled} 
	n &=&  2 \cdot \left(  z_\alpha \cdot \sqrt{ (2p_*^{(0)} + \delta_* )  (2q_*^{(0)} - \delta_* ) } + z_\beta \cdot \sqrt{p_*^{(0)}q_*^{(0)} + (p_*^{(0)}+\delta_*) (q_*^{(0)}-\delta_*) }   \right)^2 \Bigg/ \delta_*^{2}
\end{eqnarray} 
Note that in \eqref{samplesize.p.pooled} 
we have replaced $p_*^{(1)}$ with  $p_*^{(0)}+\delta_*$.  

Similarly,  the corresponding sample size  using the unpooled variance estimate is given by: 
\begin{eqnarray}  \label{samplesize.p.unpooled} 
	n  &=&  2 \cdot  \left( \frac{z_\alpha + z_\beta}{\delta_*} \right)^2 \cdot \left( p_*^{(0)} q_*^{(0)} + (p_*^{(0)}+\delta_*) (q_*^{(0)}-\delta_*)\right) 	
\end{eqnarray}
Note that, under the null hypothesis $\textrm{H}_0^*:  p_*^{(1)}-p_*^{(0)}= 0$, expressions   \eqref{samplesize.p.pooled} and \eqref{samplesize.p.unpooled} coincide.


\section{Sample Size   based on anticipated values of the composite components} \label{Section.ConvenientSampleSize}

Sample size formulae underlined in Section \ref{Section.SampleSize} are based on the parameters of the composite endpoint, that is, the event rate under the control group, $p_*^{(0)}$, and the treatment effect, $\delta_*$. In this section, we derive  the sample size based on the anticipated information on the marginal parameter values and the correlation, even if the correlation value 
is not fully specified and/or the event rates values are not accurately anticipated.

\subsection{Sample size based on composite components}

Given the event rates in the control group $ \theta= (p_1^{(0)}, p_2^{(0)})$, the expected effect size for each component $\lambda=(\delta_1,\delta_2)$, and the correlation  between the occurrence of both components $\rho$, we will denote by $n(\theta, \lambda, \rho)$ the needed sample size, which is computed by using the unpooled variance estimate, to detect a risk difference $\delta_*(\theta, \lambda, \rho)$ (see equation \eqref{effectCBE.riskdiff}) at significance level $\alpha$ with $1-\beta$ power.

The expression for $n(\theta, \lambda, \rho)$ is obtained after direct substitution into formula \eqref{samplesize.p.unpooled} and is as follows:
\begin{eqnarray} \label{nCBE}  
	n(\theta, \lambda, \rho) &=& \frac{2 \cdot \left( z_\alpha + z_\beta \right)^2 \cdot \left( p_*^{(0)}(\theta,\rho)\left(  1-p_*^{(0)}(\theta,\rho)\right)  +  \left(p_*^{(0)}(\theta,\rho) + \delta_*(\theta, \lambda, \rho) \right) \left(1-p_*^{(0)}(\theta,\rho) - \delta_*(\theta, \lambda, \rho)\right)  \right) }{\delta_*(\theta, \lambda, \rho)^2}    	
\end{eqnarray}
where $p_*^{(0)}(\theta,\rho)$ is given in \eqref{probCBE}.
Note that the sample size also relies on the  significance level $\alpha$ and the power $1-\beta$, but these are omitted for ease of notation. 
The corresponding sample size under the pooled estimate can be analogously calculated by using  $\theta$, $\lambda$ and $ \rho $ and its expression can be found in the online   support material.

\subsection{Sample size bounds } \label{section.SSbehavior}

Assuming that the correlation is the same in the two treatment groups,  it follows that the correlation takes values between the lower bound, $B_L(\cdot)$, and the upper bound, $B_U(\cdot)$, which are functions of $\theta$ and $\lambda$, and are defined as:
\begin{eqnarray} \label{corr.bounds}
	B_L(\theta,\lambda)
	&=& \max \left\lbrace
	- \sqrt{\frac{p_1^{(0)} \cdot p_2^{(0)}}{ q_1^{(0)} \cdot q_2^{(0)}}}, \ \
	- \sqrt{\frac{q_1^{(0)} \cdot  q_2^{(0)}}{ p_1^{(0)} \cdot  p_2^{(0)}}},  \ \
	- \sqrt{\frac{(p_1^{(0)}+\delta_1) \cdot (p_2^{(0)}+\delta_2)}{ (q_1^{(0)}-\delta_1) \cdot (q_2^{(0)}-\delta_2)}}, \ \
	- \sqrt{\frac{(q_1^{(0)}-\delta_1) \cdot  (q_2^{(0)}-\delta_2)}{ (p_1^{(0)}+\delta_1) \cdot  (p_2^{(0)}+\delta_2)}} \right\rbrace\\ \nonumber
	B_U (\theta,\lambda)&=&
	\min \left\lbrace
	+\sqrt{\frac{p_1^{(0)} \cdot q_2^{(0)}}{ p_2^{(0)} \cdot q_1^{(0)}}}, \ \
	+\sqrt{\frac{p_2^{(0)} \cdot q_1^{(0)}}{ p_1^{(0)} \cdot q_2^{(0)}}}, \ \
	+\sqrt{\frac{(p_1^{(0)}+\delta_1) \cdot (q_2^{(0)}-\delta_2)}{ (p_2^{(0)}+\delta_2) \cdot (q_1^{(0)}-\delta_1)}}, \ \
	+\sqrt{\frac{(p_2^{(0)}+\delta_2) \cdot (q_1^{(0)}-\delta_1)}{ (p_1^{(0)}+\delta_1) \cdot (q_2^{(0)}-\delta_2)}} \right\rbrace
\end{eqnarray}
Note that when at least one of the event rates is very close to $0$, the lower bound $B_L(\lambda,\theta)$ will also be close to $0$ and the plausible correlation values will be always positive.
We also notice that, in clinical trials the  probabilities of observing the events are often quite low and commonly smaller than $0.5$. In this case,  the expressions for  $B_L(\lambda, \theta)$ and $B_U(\lambda, \theta)$ can be simplified.  See the online supplementary material for more details.

Considering such bounds for a given marginal parameters $\theta$ and $\lambda$, the sample size $n(\theta, \lambda, \rho)$ is an increasing function of the correlation $\rho$, 
and it is bounded below and above by $ n(\theta,\lambda,B_L(\theta,\lambda)) $ and $ n( \theta,  \lambda, B_U(\theta,\lambda)) $, respectively. As a consequence, the more correlated the single endpoints are, the larger will be the necessary sample size for detecting the differences between groups in the composite endpoint. Details for this derivation  are provided in Appendix  \ref{app.samplesizeCBE}  (see   Theorem 1).

\subsection{Sample size  with uncertain correlation value} \label{SSmethod}

Since the correlation plays an important role in calculating the sample size, we propose a strategy for deriving the sample size when the parameters that correspond to the composite components are known and the correlation value is not specified in advance.

Prior knowledge about the effect of the treatment being investigated can lead to scientists foreseeing whether the two events of interest,   $\varepsilon_1$ and $\varepsilon_2$,  are weakly, moderately or strongly correlated.  We allow for prior information by splitting the rank of the correlation into three equal-sized intervals, and we consider three correlations categories: weak for the interval whose correlation values are lower; moderate for those intermediate correlation values; and strong for those correlation values that are higher. If any information exists, we will take it into account and will proceed as follows:

\begin{enumerate}[(i)]
	\item  \textit{Correlation bounds for each category:} \\
	Considering the categories weak/moderate/strong for the correlation, the  plausible correlation values for a given ($ \theta,\lambda $)  are in this situation those between the lower  and   upper   values within each category. 	If the events are weakly correlated, the    correlation is  between  $B_L(\theta,\lambda)$ and $ \left(B_U(\theta,\lambda)-B_L(\theta,\lambda) \right) \big/3$;  if they are  moderately correlated, its value lies between $ \left(B_U(\theta,\lambda)-B_L(\theta,\lambda) \right) \big/3$ and $ 2\cdot\left(B_U(\theta,\lambda)-B_L(\theta,\lambda) \right) \big/3$; and if they are  strongly correlated, it  is between $ 2\cdot\left(B_U(\theta,\lambda)-B_L(\theta,\lambda) \right) \big/3$  and $ B_U(\theta,\lambda) $.\\
	If we cannot place the correlation in any of the above categories, 
	we use the most severe case within  its plausible values, then, $B_U(\theta,\lambda)$. (See Table \ref{Table.SampleSizeApproach2}).
	\item \textit{Calculate the sample size in each category:} \\
	For the sample size, we advocate using the maximum sample size across all its possible values. 
	That is, $n(\theta,\lambda,\left(B_U(\theta,\lambda)-B_L(\theta,\lambda) \right) \big/3)$, $n(\theta,\lambda,2\left(B_U(\theta,\lambda)-B_L(\theta,\lambda) \right) \big/3)$, and $n(\theta,\lambda,B_U(\theta,\lambda)) $ for  weak, moderate or strong correlations, respectively. Note that since  we are assuming the correlation value  that maximizes the sample size across its plausible values, we are guaranteeing that the pre-specified power  $1-\beta$ is attained.\\
	If the correlation value can not be ascribed to any category,
	then, we propose a conservative sample size strategy of using the overall possible    maximum   sample size, that is, $n(\theta,\lambda,B_U(\theta,\lambda))$. 
	Table \ref{Table.SampleSizeApproach2} outlines 
	the range of  correlations and sample sizes values, together with  the proposed sample size  for each category. 
\end{enumerate} 

{
	\begin{table}[h!] 
		\small
		\centering
		\caption[position=above]{ 
			Correlation category and its subsequent   correlation bounds,  $ B_L(\cdot) $ and $ B_U(\cdot) $ (given in    \eqref{corr.bounds})
			for event rates of the composite components $\theta=(p_1^{(0)},p_2^{(0)})$, and marginal effect sizes $\lambda=(\delta_1,\delta_2)$. Sample size bounds   for each correlation category and proposed sample size strategy calculated by \eqref{nCBE}   according to the margins ($\theta,\lambda$) and for given significance level $\alpha$ and power $1-\beta$. }
		\begin{tabular}{cccc}
			\toprule 
			Category &  Correlation Bounds & Sample Size Bounds &    Sample Size     \\
			\toprule     
			Weak & $  \left[  B_L(\theta,\lambda), \frac{(B_U(\theta,\lambda)-B_L(\theta,\lambda))}{3} \right)  $ & 
			$\Big[ n(\theta,\lambda,B_L(\theta,\lambda)) , \ \ n\left( \theta,\lambda,\frac{(B_U(\theta,\lambda)-B_L(\theta,\lambda))}{3} \right)  \Big]$ &  
			$ n\left( \theta,\lambda,\frac{(B_U(\theta,\lambda)-B_L(\theta,\lambda))}{3} \right)$ \\ [4mm]   
			Moderate & 
			$ \left[   \frac{(B_U(\theta,\lambda)-B_L(\theta,\lambda))}{3},  \frac{2(B_U(\theta,\lambda)-B_L(\theta,\lambda))}{3}  \right)  $ &
			$ \Big[  n\left(\theta,\lambda, \frac{(B_U(\theta,\lambda)-B_L(\theta,\lambda))}{3} \right) , \ \  n\left( \theta,\lambda,\frac{2(B_U(\theta,\lambda)-B_L(\theta,\lambda))}{3} \right)\Big] $  & 
			$ n\left(  \theta,\lambda,\frac{2(B_U(\theta,\lambda)-B_L(\theta,\lambda))}{3} \right) $ \\ [4mm]      
			Strong & 
			$  \left[   \frac{2(B_U(\theta,\lambda)-B_L(\theta,\lambda))}{3}, B_U(\theta,\lambda)  \right)  $ & 
			$ \Big[  n\left( \theta,\lambda,\frac{2(B_U(\theta,\lambda)-B_L(\theta,\lambda))}{3} \right) , \ \ n( \theta,\lambda,  B_U(\theta,\lambda)) \Big]$   & 
			$n(\theta,\lambda,B_U(\theta,\lambda))$  \\
			[4mm]
			No prior information  & $  \left[  B_L(\theta,\lambda), B_U(\theta,\lambda) \right)  $ & $\Big[ n(\theta,\lambda,B_L(\theta,\lambda)) , \ \ n\left( \theta,\lambda, B_U(\theta,\lambda) \right)  \Big]$ &  $ n\left( \theta,\lambda,B_U(\theta,\lambda) \right)$ \\
			\bottomrule
		\end{tabular} 
		\label{Table.SampleSizeApproach2}
\end{table} }

\subsection{Sample size accounting  for departures from the anticipated event rates} \label{Section.Bands}

The marginal parameters are often estimated through previous studies or pivotal trials with a limited number of patients and whose patient populations or concomitant drugs could differ from the current ones. Because of that, there is great uncertainty in the values that need to be anticipated for computing the sample size. In this section, we consider that the event rates   $p_1^{(0)}$ and $p_2^{(0)}$  have been previously  estimated  and their corresponding standard errors  of the point estimate are
provided.  

Let $I_k= \left[ \underline{p}_k^{(0)},\bar{p}_k^{(0)} \right]$ denote a set of plausible values  for the true value  of $p_k^{(0)}$. 
For instance, for those previous trials in which we have the standard deviations for the event rates, we can use the set of plausible values for $p_k^{(0)}$ that a $95\%$ confidence interval would yield.
We   address the issue of sizing a trial for a significance level $\alpha$ and power $1-\beta$
based on the intervals $I_1$ and $I_2$,
and for fixed effects $\delta_1$ and $\delta_2$ when the correlation value is not known.

We state that,  for given $\delta_1$ and $\delta_2$  and at fixed $\rho=r$, the   sample size $n(p_1^{(0)},p_2^{(0)},\lambda,r)$  (see equation \eqref{nCBE}) that is needed for power $1-\beta$ at a significance level $\alpha$, 
falls into the interval:
\begin{eqnarray} \label{CI_ss}
	 {\cal I}(r, I_1, I_2, \lambda) =
	[ \ n(\underline{p}_1^{(0)}, \underline{p}_2^{(0)}, \lambda, r), 
	\ \  n(\bar{p}_1^{(0)}, \bar{p}_2^{(0)}, \lambda, r)  \ ]
\end{eqnarray}

This interval is such that it contains the    sample size   required
to attain power $1-\beta$, which is necessary
for detecting an effect size equal to  $\delta_* = p_*^{(1)}- p_*^{(0)}$
at a significance level $\alpha$
according to the marginal effects $\delta_1$ and $\delta_2$,  the correlation $r$,
and the
event rates $p_k^{(0)}$ within $I_k$  ($k=1,2$). Note that the interval ${\cal I}(r, I_1, I_2, \lambda)$ gives us the plausible sample size values by taking into account the uncertainty of the marginal parameter values, and it provides us the maximum sample size that we would need even though the anticipated event rates are not accurate.

Considering $\Theta = (I_1, I_2,\lambda)$   the set of values for the marginal parameters, and denoting by $\rho_L(\Theta) = \max_{ (\pi_1, \pi_2) \in I_1 \times I_2} B_L(\pi_1, \pi_2, \lambda)$   and
$\rho_U(\Theta) = \min_{(\pi_1, \pi_2) \in I_1 \times I_2} B_U(\pi_1, \pi_2, \lambda)$   the lower and upper bounds of the correlation within the set $\Theta$. Then, for all $(\pi_1, \pi_2) \in I_1\times I_2$,    and   $\rho \in \left( \rho_L(\Theta),\rho_U(\Theta)\right) $,  we have that:
\begin{eqnarray} \label{condition} 
	n(\pi_1, \pi_2, \lambda,\rho)   \leq  U(\Theta) = n(\bar{p}_1^{(0)}, \bar{p}_2^{(0)}, \lambda, \rho_U(\Theta))
\end{eqnarray}
Furthermore, for given $\bar{p}_1^{(0)}, \bar{p}_2^{(0)}, \lambda$, the sample size   $n(\bar{p}_1^{(0)}, \bar{p}_2^{(0)}, \lambda, \rho)$ is an increasing function of the correlation $\rho$.

The sample size given by $n(\bar{p}_1^{(0)}, \bar{p}_2^{(0)}, \lambda, \rho)$ delimits the values that the sample size could have in terms of the correlation   accounting for plausible deviations in the anticipated event rates.
If there is no prior information on the correlation, we can use $ U(\Theta)$ as the needed sample size. If otherwise, we have some prior information on the correlation value, the rationale used in  \ref{SSmethod}    using correlation categories   can be as well applied here to the function $n(\bar{p}_1^{(0)}, \bar{p}_2^{(0)}, \lambda, \rho)$.  
Table \ref{Table.SampleSizeApproachBands} provides the sample size strategy under this circumstance. We lay out the performance of the sample size when varying the event rates in the intervals $I_1$ and $I_2$ and the subsequent sample size behavior according to the correlation  in Propositions 2 and 3 in the supplementary material.

{
	\begin{table}[h!] 
		\small
		\centering
		\caption[position=above]{
			Correlation category and its subsequent   correlation bounds,  $ \rho_L(\cdot) $ and $ \rho_U(\cdot) $ 
			for the intervals of plausible values for event rates, $I_1=\left[ \underline{p}_1^{(0)},\bar{p}_1^{(0)} \right]$ and $I_2=\left[ \underline{p}_2^{(0)},\bar{p}_2^{(0)} \right]$, and marginal effect sizes $\lambda=(\delta_1,\delta_2)$, and where $\Theta=(I_1,I_2,\lambda)$ denotes the set of values for the marginal components. Sample size bounds   for each correlation category and proposed sample size strategy calculated by \eqref{nCBE}   according to the intervals $I_1$ and $I_2$,   the marginal effect sizes $\lambda$, for given significance level $\alpha$ and power $1-\beta$. }
		\begin{tabular}{cccc}
			\toprule 
			Category 
			&  Correlation Bounds 
			& Sample Size Bounds
			&  Chosen Sample Size     \\
			\toprule 
			Weak & $  \left[  \rho_L(\Theta), \frac{\rho_U(\Theta)-\rho_L(\Theta)}{3} \right)  $ 
			& $\Big[ n(\underline{p}_1^{(0)}, \underline{p}_2^{(0)}, \lambda,  \rho_L(\Theta)) , \ \ n\left( \bar{p}_1^{(0)}, \bar{p}_2^{(0)}, \lambda,\frac{\rho_U(\Theta)-\rho_L(\Theta)}{3} \right)  \Big]$ 
			& $ n\left(  \bar{p}_1^{(0)}, \bar{p}_2^{(0)}, \lambda, \frac{\rho_U(\Theta)-\rho_L(\Theta)}{3}\right)$ \\ [4mm]     
			Moderate & $ \left[   \frac{\rho_U(\Theta)-\rho_L(\Theta)}{3},  \frac{2(\rho_U(\Theta)-\rho_L(\Theta))}{3}  \right)  $ 
			& $  \left[   n\left(\underline{p}_1^{(0)}, \underline{p}_2^{(0)}, \lambda, \frac{\rho_U(\Theta)-\rho_L(\Theta)}{3}\right) , \ \  n\left( \bar{p}_1^{(0)}, \bar{p}_2^{(0)}, \lambda, \frac{2(\rho_U(\Theta)-\rho_L(\Theta))}{3} \right) \right] $
			& $ n\left( \bar{p}_1^{(0)}, \bar{p}_2^{(0)} , \lambda,\frac{2(\rho_U(\Theta)-\rho_L(\Theta))}{3}\right) $  \\ [4mm]     
			Strong  
			& $  \left[   \frac{2(\rho_U(\Theta)-\rho_L(\Theta))}{3}, \rho_U(\Theta)  \right)  $ 
			& $ \Big[  n\left( \underline{p}_1^{(0)}, \underline{p}_2^{(0)}, \lambda, \frac{2(\rho_U(\Theta)-\rho_L(\Theta))}{3}\right) , \ \ n(\bar{p}_1^{(0)}, \bar{p}_2^{(0)}, \lambda,\rho_U(\Theta)) \Big]$
			&   $n(\bar{p}_1^{(0)}, \bar{p}_2^{(0)}, \lambda,\rho_U(\Theta))$  \\
			[4mm]   
			No prior information  
			& $  \left[  \rho_L(\Theta), \rho_U(\Theta) \right)  $ 
			& $\Big[ n\left( \underline{p}_1^{(0)}, \underline{p}_2^{(0)}  ,  \lambda,\rho_L(\Theta)\right), \ \ n\left( \bar{p}_1^{(0)}, \bar{p}_2^{(0)} , \lambda,\rho_U(\Theta)\right)  \Big]$ 
			&  $ n\left(\bar{p}_1^{(0)}, \bar{p}_2^{(0)} , \lambda, \rho_U(\Theta)\right)$ \\
			\bottomrule
		\end{tabular} 
		\label{Table.SampleSizeApproachBands}
	\end{table}
}

\subsection{Power performance of the proposed strategies }

Given   ($\theta, \lambda, \rho$)  and for a fixed sample size $N$, the power function using the unpooled variance estimate is defined as:
\begin{eqnarray} \label{power_unpooled}
	\psi(\theta, \lambda, \rho, N) &=& \Phi \left( \frac{ \sqrt{N} \cdot \delta_*(\theta, \lambda, \rho)   }{  \sqrt{p_*^{(0)}(\theta,\rho)\left(  1-p_*^{(0)}(\theta,\rho)\right)  +  \left(p_*^{(0)}(\theta,\rho) + \delta_*(\theta, \lambda, \rho) \right) \left(1-p_*^{(0)}(\theta,\rho) - \delta_*(\theta, \lambda, \rho)\right) }  } - z_\alpha \right) 
\end{eqnarray}
where $\Phi(\cdot)$ denotes the cumulative distribution of the standard normal distribution. The power function for the pooled variance estimator can be found  in the online support material.

In what follows, we show that the planned power $1-\beta$ is achieved with any of the previous  strategies in 
Subsections \ref{SSmethod} and \ref{Section.Bands}.
\begin{itemize}
	\item    If $\theta$ and $\lambda$ are fixed and the correlation value is  not  known,  we have   $n(\theta,\lambda,\rho)\leq  n\big(\theta,\lambda,B_U(\theta,\lambda)\big)$ and the proposed sample size becomes $N=n\big(\theta,\lambda,B_U(\theta,\lambda)\big)$. The  resulting power is then such that:
	$$\psi\left(\theta, \lambda, \rho, n\big(\theta, \lambda,B_U(\theta, \lambda)\big) \right)\leq\psi\left(\theta, \lambda, \rho, n(\theta, \lambda,\rho) \right).$$
	The power attained using the upper bound of the correlation is equal to the pre-specified power value ($1-\beta$) when the  correlation $\rho$ is the maximum value within its range, that is, $B_U(\theta, \lambda)$. Otherwise, if the correlation is less than $B_U(\theta, \lambda)$, the power will be always higher than the pre-specified power. Table S1 
	in the online supplementary material details  the power performance when the correlation  categories are taken into account.
	\item   If the event rate value $p_k^{(0)}$ is within the interval $I_k$ for $k=1,2$ and the effect sizes
	$\lambda$ are fixed, then $n(p_1^{(0)}, p_2^{(0)}, \lambda,\rho)  \leq  n\big(\bar{p}_1^{(0)}, \bar{p}_2^{(0)}, \lambda, \rho\big)$.	
	If in addition  we have no prior information on the correlation value, then   since the  sample size increases with respect to the correlation, it follows that $ n\big(\bar{p}_1^{(0)}, \bar{p}_2^{(0)}, \lambda, \rho\big) \leq n\big(\bar{p}_1^{(0)}, \bar{p}_2^{(0)}, \lambda, \rho_U(\Theta)\big)$, and then 	
	the proposed sample size turns into $N= n\big(\bar{p}_1^{(0)}, \bar{p}_2^{(0)}, \lambda, \rho_U(\Theta)\big)$. The corresponding power then satisfies:
	$$\psi\left(\theta, \lambda, \rho, n\big( \bar{p}_1^{(0)}, \bar{p}_2^{(0)}, \lambda, \rho_U(\Theta)\big)  \right)\leq\psi\left(\theta, \lambda, \rho, n(p_1^{(0)}, p_2^{(0)}, \lambda,\rho) \right).$$
	The power attained is equal to the pre-specified power value when the event rates $p_k^{(0)}$ take the upper values $\bar{p}_k^{(0)}$ and the correlation $\rho$ is equal to $\rho_U(\Theta)$. If that is not the case, the   power obtained  will be larger than the pre-specified $1-\beta$. 
\end{itemize}


\section{Motivating example: TACTICS-TIMI $18$ trial}
\label{Section.casestudy} 

In managing the syndrome of unstable angina and non-Q-wave acute myocardial infarction, there is controversy over whether using an invasive strategy rather than a conservative strategy offers any advantage. TACTICS-TIMI $18$ was a randomized trial that evaluated the efficacy of invasive and conservative treatment strategies in patients with unstable angina and non-Q-wave AMI treated with tirofiban, heparin, and aspirin \cite{Cannon2001}.  

Patients were randomly assigned to either an early invasive strategy or an early conservative strategy. The primary hypothesis of the TACTICS-TIMI $18$ trial was that an early invasive strategy would reduce the combined incidence of death, acute myocardial infarction, and rehospitalization for acute coronary syndromes at six months when compared with an early conservative strategy. The primary endpoint was the composite endpoint formed by a combination of incidence of death or  non-fatal myocardial infarction  ($\varepsilon_1$), and rehospitalization for acute coronary syndrome ($\varepsilon_2$) at six months.

For illustrative purposes, we assume that a trial will be planned for a similar setting and that the results of TACTICS-TIMI 18 are to be used. Since   previous   studies to  TACTICS-TIMI 18 also considered the events death and  non-fatal myocardial infarction altogether, we presume that the event rate and effect size on the endpoint $\varepsilon_1$ can be anticipated despite being composed by two events.
The estimated values for the frequency of death or  non-fatal myocardial infarction  ($\varepsilon_1$)  in the conservative strategy group was $\hat{p}_1^{(0)}=0.095 $  with a standard deviation of $0.009$; whereas the frequency of rehospitalization for acute coronary syndrome ($\varepsilon_2$) was  $\hat{p}_2^{(0)}= 0.137$  with a   standard deviation of $0.010$.
Based on the standard deviations of the  estimated event rates, we use the $95\%$ confidence intervals  as a set of     plausible values among which  the true values $p_1^{(0)}$, $p_2^{(0)}$ take values, that is, $I_1= [0.078, 0.112]$ and $I_2= [0.117, 0.157]$.
The observed effects on TACTICS-TIMI $18$  were  $ \delta_1=-0.022$ and  $ \delta_2 = -0.027$,  and we will use these as the expected effects on the new experimental trial.

We consider these parameters to construct the correlation bounds outlined in equation \eqref{corr.bounds}. The effects $\delta_1$ and $\delta_2$ and the   values $\hat{p}_1^{(0)} $ and $\hat{p}_2^{(0)}$ imply that  the eligible values for $ \rho $ lie in the interval ($-0.10$, $0.80$).
Using the intervals $I_1$ and $I_2$,  the  correlation bounds are such that the considered   values are plausible for any event rate within $I_1$ and $I_2$. This gives us the correlation bounds ($-0.08$, $0.77$).
Table \ref{TableFigure_CaseStudy} and its accompanying figure show  the correlation bound according to $\delta_1$ and $\delta_2$ with varying   values of the event rates. Observe that the upper bound takes the value $1$ when both event rates are equal, and the lower bound tends to $0$ when at least one of the event rates becomes smaller.

\begin{table}[h!]
	\centering
	\parbox{0.45\textwidth}{
		\begin{center}
			\begin{tabular}{cc}
				\hline 
				\textbf{Event rate values} &
				\textbf{Correlation Bounds}  \\
				\toprule   
				$\hat{p}_1^{(0)}=0.095$, \ $\hat{p}_2^{(0)}=0.137$ &  $ -0.10 \leq \rho \leq 0.80 $  \\[2mm]  
				$\bar{p}_1^{(0)}=0.112$, \ $\bar{p}_2^{(0)}=0.157$ &  $ -0.12 \leq \rho \leq 0.81 $    \\ [2mm]  
				$\underline{p}_1^{(0)}=0.078$, \ $\underline{p}_2^{(0)}=0.117$ &   $ -0.08 \leq \rho \leq 0.77 $     \\
				\bottomrule
			\end{tabular}
		\end{center}
		\caption{Lower bound, $B_L(\theta,\lambda)$,     and upper bound, $B_U(\theta,\lambda)$,   for the correlation according  to the   effect sizes $ \delta_1=-0.022$, $ \delta_2 = -0.027$  and for different values of the event rates.\newline
			\textbf{FIGURE  }	Lower bound  (surface  in blue) and upper bound  (in red)    for the correlation according to the   effect sizes $ \delta_1=-0.022$, $ \delta_2 = -0.027$ and where the marginal event rates take values between $0$ and $0.2$.}
		\label{TableFigure_CaseStudy}
	}
	\qquad
	\begin{minipage}[c]{0.5\textwidth}%
		\centering
		\includegraphics[width=0.9\textwidth]{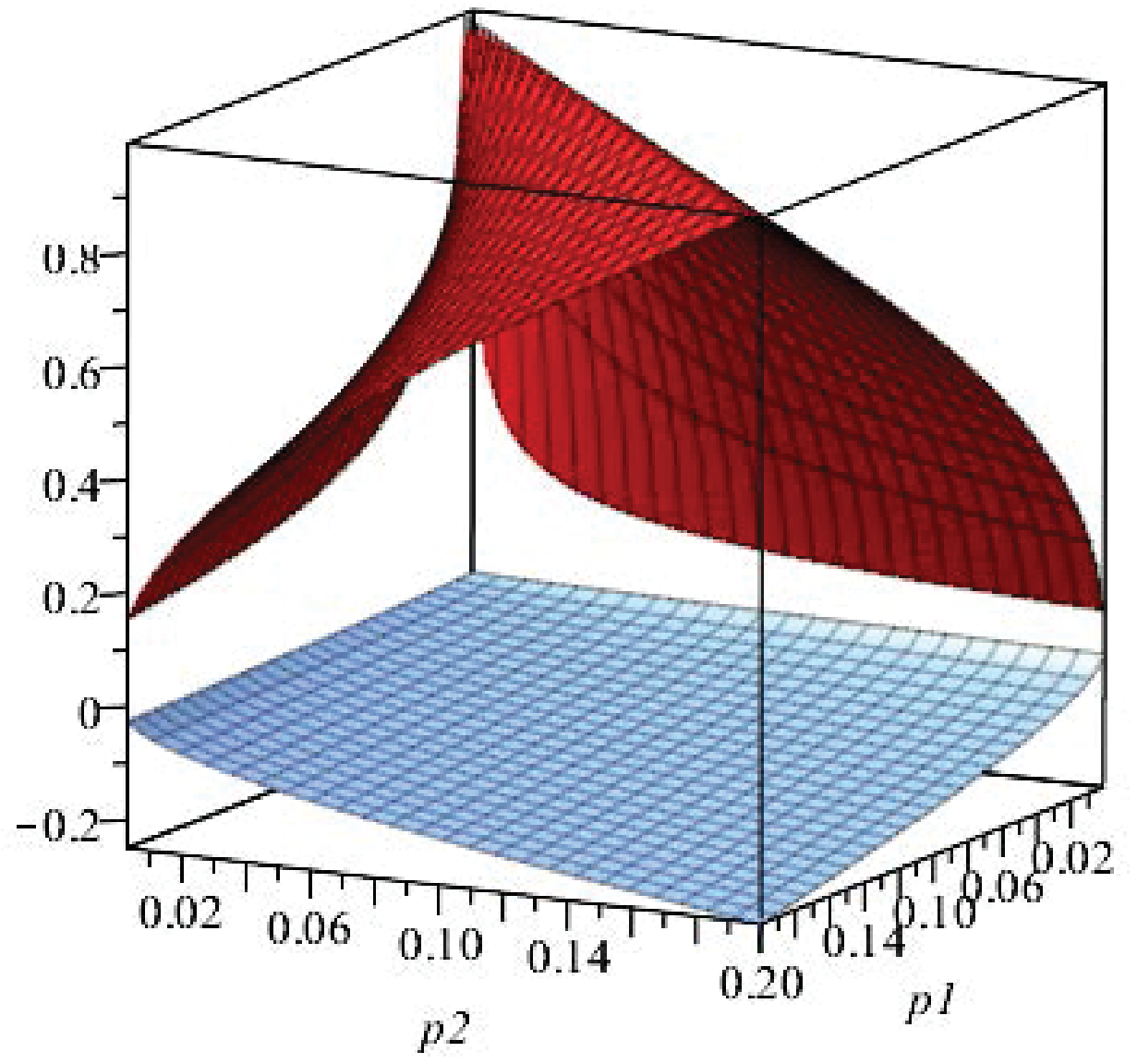}
	\end{minipage}
\end{table}

We illustrate the aspects of calculating power and sample size using the platform CompARE. CompARE calculates the sample size by anticipating the marginal information in terms of either risk difference, relative risk, or  odds ratio. In this particular case, we use the statistical test for risk difference under  pooled variance in order to ascertain the treatment differences in the composite endpoint at a significance level  of  $\alpha=0.025$ and target power of $1-\beta = 0.80$. The results obtained from CompARE are presented in the form of summary tables and plots.

Figure \ref{Plot_samplesizepower2} (left panel) depicts the performance of the sample size in terms of the correlation for given marginal parameters $\theta= (\hat{p}_1^{(0)}, \hat{p}_2^{(0)})$ and $\lambda=(\delta_1, \delta_2)$; and it illustrates the recommended sample size for each correlation category (weak, moderate, and strong).  
The solid line represents the sample size as a function of the correlation computed for the anticipated values $\theta$, and the shaded areas represent the region of values, constructed by   $I_1$, $I_2$, $ \delta_1$ and $ \delta_2$, within which interval the sample size falls. 
Based on  $I_1$ and $I_2$ the proposed sample size (in dotted lines)  is the upper value of the shaded area within the correlation category.  

Note that the sample size is highly sensitive to the anticipated parameters. For instance, for $\rho=0.3$,
using   $\hat{p}_1^{(0)}$ and $\hat{p}_2^{(0)}$,   the required sample size is $n=3030$. This sample size, however, can differ substantially from that calculated using other reasonable values, such as the upper or lower limits for the intervals $I_1$ and $I_2$, which would imply  $n=2511$ and  $n=3540$, respectively.

Figure \ref{Plot_samplesizepower2} (right panel) describes the statistical power achieved under the proposed method. Assuming that we have correctly anticipated the correlation category, observe that   in all cases the achieved power is larger than the planned power, $1-\beta$. Then,   the method guarantees the desired power. If we could correctly anticipate the values of the event rates, then the achieved power would lie between $ 0.80$ and $0.87$, in accordance with the plausible correlation values. If we base the sample size calculation on the intervals $I_1$ and $I_2$,
we will be overestimating the statistical power more than in the previous case, thus obtaining a power between $ 0.80$ and $0.95$.

Table \ref{Table_CaseStudy}  describes the proposed sample size for each correlation category and reports the possible values for the statistical power, assuming that we have correctly anticipated the correlation category.

\begin{figure}[h!]
	\centering
	\includegraphics[width=1\linewidth]{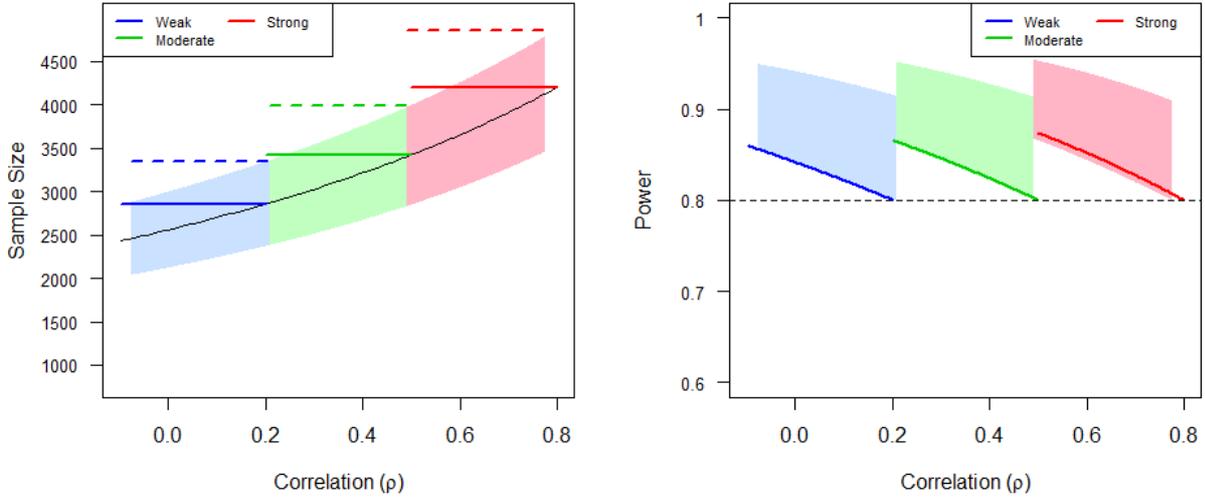}
	\caption[position=above]{Sample size (left panel) and power (right panel) as a function of the correlation according to the marginal effect sizes $ \delta_1=-0.022$ and $ \delta_2 = -0.027$;  either based on the  point values     $\hat{p}_1^{(0)}=0.095 $, $ \hat{p}_2^{(0)}= 0.137$ for the event rates  (solid line) or based on  the interval of plausible values for the event rates $I_1= [0.078, 0.112]$ and $I_2= [0.117, 0.157]$ (shaded areas).  The proposed sample size for each correlation category is highlighted in solid and dotted lines for, respectively, the point values and the interval values for the event rates. }
	\label{Plot_samplesizepower2}
\end{figure}

\begin{table}[h!]
	\centering
	\caption[position=above]{Recommended sample size for testing differences between the invasive strategy as compared with the conservative strategy. Underlying marginal parameters   are as follows:   $ p_1^{(0)}=0.095 $, $ p_2^{(0)}= 0.137$, $ \delta_1=-0.022$, $ \delta_2 = -0.027$.
		Both sample size and power were calculated based on the statistic \eqref{test.difprop.pooled} under the pooled variance  for a one-sided test at the significance level of $\alpha=0.025$. The given sample size was calculated to detect the effect on the composite endpoint with  the desired overall power of $1-\beta =0.80$. For calculating the power of the test, three sample size situations were considered, depending on the strength of the correlation: i)  weak correlation; ii) moderate correlation; iii) strong correlation. }
	\begin{tabular}{cccc}
		\hline 
		\multicolumn{4}{l}{Based on point values     $ p_1^{(0)}=0.095 $, $ p_2^{(0)}= 0.137$ for the event rates: }\\
		\multicolumn{4}{l}{Correlation bounds: $B_L(\theta, \lambda)=-0.10$, $B_U(\theta, \lambda)=0.80$.}\\
		\textbf{Association strength} &
		\textbf{Correlation} & \textbf{Sample size} & \textbf{Achieved power}  \\  
		\toprule   
		Weak &  $ -0.10 \leq \rho \leq 0.20 $  &   $2860$ & ($ 0.80 $, $0.86$)\\
		Moderate &  $ 0.20 < \rho \leq  0.50 $  &    $3425$ &  ($ 0.80 $, $0.87$) \\
		Strong &  $ 0.50 < \rho \leq  0.80 $  &   $4201$ &  ($ 0.80 $,$ 0.87 $) \\ 
		\hline
		\hline 
		\multicolumn{4}{l}{Based on intervals $I_1= [0.078, 0.112]$ and $I_2= [0.117, 0.157]$ for the event rates: }\\ 
		\multicolumn{4}{l}{Correlation bounds: $\rho_L(\Theta)=-0.08$, $\rho_U(\Theta)=0.77$.}\\
		\textbf{Association strength} &
		\textbf{Correlation} & \textbf{Sample size} & \textbf{Achieved power}  \\ 
		\toprule   
		Weak &  $ -0.08 \leq \rho \leq 0.21 $  &   $3355$ & ($ 0.80 $,   $0.95$)\\
		Moderate &  $ 0.21 < \rho \leq  0.49 $  &   $3970$ &  ($ 0.80 $,  $0.95$) \\
		Strong &  $ 0.49 < \rho \leq  0.77 $  &  $4782$ &  ($ 0.80 $, $0.95$) \\ 
		\bottomrule
	\end{tabular}
	\label{Table_CaseStudy}
\end{table}


\section{An extension for   risk ratio and  odds ratio}\label{Section.extension}

In this Section,  we show that the risk ratio and odds ratio for the composite endpoint can also be expressed in terms of its margins plus the correlation, and we extend the sample size derivation given in Section \ref{Section.ConvenientSampleSize} for evaluating  the   risk and odds ratio.

\subsection{Composite effect expressed in terms of the risk ratio or  the odds ratio}

Let $\mathrm{R}_k$ and $\mathrm{OR}_k$ denote the risk ratio and odds ratio, respectively,  for the $k$-th event. 
The risk ratio for the composite endpoint, $\mathrm{R}_*$, is expressed in terms of the risk ratio of its components $\mathrm{R}_1$ and $\mathrm{R}_2$, the event rates under control group, $p_1^{(0)}$ and $p_2^{(0)}$, and the correlation between them, $\rho$, as follows:
\begin{eqnarray} \label{effectCBE.riskratio}
	\mathrm{R}_* &=& \frac{p_1^{(0)}\mathrm{R}_1 + p_2^{(0)}\mathrm{R}_2- p_1^{(0)} p_2^{(0)}\mathrm{R}_1\mathrm{R}_2 - \rho  \sqrt{p_1^{(0)}\mathrm{R}_1  p_2^{(0)}\mathrm{R}_2 (1-p_1^{(0)}\mathrm{R}_1) (1-p_2^{(0)}\mathrm{R}_2) }}{1-  q_1^{(0)} q_2^{(0)} - \rho  \sqrt{p_1^{(0)} p_2^{(0)} q_1^{(0)} q_2^{(0)}}}
\end{eqnarray}
Analogously, the   odds ratio for the composite endpoint $\mathrm{OR}_*$ is  defined according to its margins and the correlation is given by: 
\begin{eqnarray} \label{effectCBE.oddsratio}
	\mathrm{OR}_* &=& 
	\frac{
		\left( \left(  1 +   \frac{\mathrm{OR}_1 p_1^{(0)}}{1-p_1^{(0)}}\right)  \left(  1 +  \frac{\mathrm{OR}_2 p_2^{(0)}}{1-p_2^{(0)}}\right) -1 - \rho  \sqrt{   \frac{\mathrm{OR}_1 \mathrm{OR}_2  p_1^{(0)} p_2^{(0)}}{(1-p_1^{(0)})(1-p_2^{(0)})}   }
		\right)\cdot \left( 1+ \rho  \sqrt{ \frac{p_1^{(0)} p_2^{(0)}}{(1-p_1^{(0)})(1-p_2^{(0)})}}\right) 
	}{ \left( \left( 1+ \frac{p_1^{(0)}}{(1-p_1^{(0)})}\right) \cdot \left( 1+ \frac{p_2^{(0)}}{(1-p_2^{(0)})}\right)  -1 - \rho  \sqrt{\frac{p_1^{(0)} p_2^{(0)}}{(1-p_1^{(0)})(1-p_2^{(0)})}}\right) \cdot \left( 1+ \rho  \sqrt{  \frac{\mathrm{OR}_1 \mathrm{OR}_2  p_1^{(0)} p_2^{(0)}}{(1-p_1^{(0)})(1-p_2^{(0)})}  }\right) }     
\end{eqnarray}	
The derivations of   equations  \eqref{effectCBE.riskratio} and \eqref{effectCBE.oddsratio}  are postponed to    Appendix \ref{app.treateffect}. 
By inspection of    \eqref{effectCBE.riskdiff}, \eqref{effectCBE.riskratio}, and \eqref{effectCBE.oddsratio}, we observe that 
if there is no effect on the components, that is, $\delta_1=\delta_2=1$, $\mathrm{R}_1=\mathrm{R}_2=1$ or $\mathrm{OR}_1=\mathrm{OR}_2=1$, then there is no effect on the composite endpoint, $\delta_*=\mathrm{R}_*=\mathrm{OR}_*=1$. However, the reciprocal does not follow: no effect on the composite endpoint is compatible with some effect on the components. Therefore, it is important to remark, as other authors have warned before \cite{Ferreira-Gonzalez2007,Ferreira-Gonzalez2007b,Tomlinson2010}, that not finding a beneficial effect on composite endpoint is not a guarantee of not having some effect on the components, hence the effect on the composite endpoint cannot be treated as if it were an indicator of some specific effect on its components.

\subsection{Sample size calculations in terms of risk ratio and odds ratio  }

The null hypothesis   in terms of the risk ratio is stated as  $\textrm{H}_0^*: \log(\mathrm{R}_*) = 0 $ and
the alternative hypothesis assuming  a risk reduction   is $\textrm{H}_1^*:\log(\mathrm{R}_*)<0$. 
The statistic that we use for testing the  
significance of the relative risk R$_*$ is:
\begin{eqnarray*} 
	\textrm{Z}_{*,n} &=&  \log(\widehat{\mathrm{R}}_* ) \big/ \sqrt{\widehat{Var}(\log(\widehat{\mathrm{R}}_* ))} 
\end{eqnarray*}
where   $ \widehat{\mathrm{R}}_* = \widehat{p}_*^{(1)}/\widehat{p}_*^{(0)}$. Under $\textrm{H}_0^*$, $ \textrm{Z}_{*,n} $ asymptotically follows the standard normal distribution; thus, we will reject $\textrm{H}_0^*$ at the $\alpha$ significance level if $ \textrm{Z}_{*,n} <-z_\alpha$.
As in Section \ref{Section.SampleSize},  we estimate  the variance $Var(\widehat{\mathrm{R}}_*)$ using the  pooled variance by means of 
$\widehat{Var}_{H_0}(\log(\widehat{\mathrm{R}}_* ))  = 
\frac{2}{n^{(0)} } \cdot \frac{ \widehat{q}_*^{(0)}+\widehat{q}_*^{(1)}}{ \widehat{p}_*^{(0)}+\widehat{p}_*^{(1)}} $  or by using the  unpooled variance, 
$\widehat{Var}_{H_1}(\log(\widehat{\mathrm{R}}_* ))  =    \frac{1}{n^{(0)}} \left(   \frac{1- \widehat{\mathrm{R}}_* \widehat{p}_*^{(0)} }{\widehat{\mathrm{R}}_* \widehat{p}_*^{(0)}} + \frac{\widehat{q}_*^{(0)} }{\widehat{p}_*^{(0)}}   \right) $.

For a given probability under control group $p_*^{(0)}$, and a significance level $\alpha$,  the needed sample size for detecting a risk ratio $\Gamma_*=p^{(1)}_*/p^{(0)}_*$ with power $1-\beta$ is given by: 
\begin{eqnarray} \label{samplesize.R.unpooled} 
	n &=&  2\cdot   \left( z_\alpha   + z_\beta  \right)^2 \cdot  \left(  \frac{1- \Gamma_* p_*^{(0)} }{\Gamma_* p_*^{(0)}} + \frac{q_*^{(0)} }{p_*^{(0)}}  \right) \Bigg/ \log(\Gamma_*)^2 
\end{eqnarray} 
The corresponding sample size when the pooled variance is used can be seen in  Table \ref{Table.SampleSizes}.

When measuring the effect of treatment with the odds ratio, the  null hypothesis     $\textrm{H}_0^*: \log(\mathrm{OR}_*) = 0 $ is compared with
the alternative hypothesis   $\textrm{H}_1^*:\log(\mathrm{OR}_*)<0$. To test the above hypotheses we use the statistic:
\begin{eqnarray*}
	\textrm{W}_{*,n} &=& \log(\widehat{\mathrm{OR}_*}) \big/  \sqrt{\widehat{Var}(\log(\widehat{\mathrm{OR}_*}))  } 
\end{eqnarray*}
where  $ \widehat{\mathrm{OR}}_* = \frac{\widehat{p}_*^{(1)}/\widehat{q}_*^{(1)}}{\widehat{p}_*^{(0)}/\widehat{q}_*^{(0)}}$ and 
where the pooled and unpooled variance estimates are given, respectively, by
$\widehat{Var}_{H_0}(\log(\widehat{\mathrm{OR}_*}))  =  \frac{8}{n^{(0)}  ( \widehat{p}_*^{(0)}+\widehat{p}_*^{(1)}) (\widehat{q}_*^{(0)}+\widehat{q}_*^{(1)}) }$  and 
$\widehat{Var}_{H_1}(\log(\widehat{\mathrm{OR}_*}))  =\frac{1}{n^{(0)}} \left( \frac{1}{  \hat{p}_*^{(0)} \hat{q}_*^{(0)}} + \frac{1}{  \hat{p}_*^{(1)} \hat{q}_*^{(1)}}\right) $.

Then the needed sample size is calculated using the unpooled variance, for detecting a treatment difference of $ \mathrm{OR}_*  = \Delta_* $ in order to have power $1-\beta$ at level $\alpha$ for given $p_*^{(0)}$, and it is given by: 
\begin{eqnarray} \label{samplesize.OR.unpooled}  
	n  &=&   2\cdot  \left( \frac{z_\alpha + z_\beta }{\log(\Delta_*)}\right)^2 \cdot \left(\frac{(  q_*^{(0)} +  p_*^{(0)}    \Delta_* )^2}{p_*^{(0)} q_*^{(0)}\Delta_*  } + \frac{1}{ p_*^{(0)} q_*^{(0)}} \right)  
\end{eqnarray}  
The sample size expression when using the pooled  variance   can be found in Table \ref{Table.SampleSizes}.

\begin{table}[h!] 
	\centering
	\caption[position=above]{ 
		Formulae for sample size determination   when comparing two treatments with respect to difference proportions, relative risks or odds ratio contrasts in a balanced design; where $n$ and $n^{(i)}$ denote  the total sample size and sample size per group ($i=0,1$) needed for testing the effect $\delta_*$, $\Gamma_*$ or $\Delta_*$ for a given event rate within control group $p_*^{(0)}$ at significance level $\alpha$ with $1-\beta$ power. }
	\begin{tabular}{ccc} 
		\toprule 
		& \textbf{Variance estimator} & \textbf{Sample Size formula\tnote{$ \star $}} \\ 
		\bottomrule 
		\textbf{Risk difference} & & \\
		Pooled variance &  $\frac{\left(  \widehat{p}_*^{(0)} + \widehat{p}_*^{(1)} \right)\left(  \widehat{q}_*^{(0)} + \widehat{q}_*^{(1)} \right) }{2n^{(0)}} $  &  
		$n =  2\cdot\left(  z_\alpha \cdot \sqrt{2 \bar{p}_*\bar{q}_*} + z_\beta \cdot \sqrt{p_*^{(0)}q_*^{(0)} + (p_*^{(0)}+\delta_*) (q_*^{(0)}-\delta_*) }   \right)^2 \Bigg/ \delta_*^{2}$  \\ 
		Unpooled variance &  
		$ \frac{\left( \widehat{p}_*^{(0)} \widehat{q}_*^{(0)} + \widehat{p}_*^{(1)} \widehat{q}_*^{(1)} \right)}{n^{(0)}  }   $
		& 
		$ n  =  2\cdot \left( z_\alpha + z_\beta \right)^2 \cdot \left( p_*^{(0)} q_*^{(0)} + (p_*^{(0)}+\delta_*) (q_*^{(0)}-\delta_*)\right) \Bigg/ \delta_*^{2}$    
		\\
		\hline
		\textbf{Risk ratio} &&\\
		Pooled variance &
		$\frac{2}{n^{(0)} } \cdot \frac{ \widehat{q}_*^{(0)}+\widehat{q}_*^{(1)}}{ \widehat{p}_*^{(0)}+\widehat{p}_*^{(1)}} $
		&
		$n =   2\cdot \left( z_\alpha \sqrt{ \frac{2 \bar{q}_*}{\bar{p}_*} } + z_\beta \sqrt{ \frac{1- \Gamma_* p_*^{(0)} }{\Gamma_* p_*^{(0)}} + \frac{q_*^{(0)} }{p_*^{(0)}}  } \right)^2 \Bigg/ \log(\Gamma_*)^2  $\\
		Unpooled variance &
		$\frac{1}{n^{(0)}} \left(   \frac{1- \widehat{\mathrm{R}}_* \widehat{p}_*^{(0)} }{\widehat{\mathrm{R}}_* \widehat{p}_*^{(0)}} + \frac{\widehat{q}_*^{(0)} }{\widehat{p}_*^{(0)}}   \right)$
		&
		$n \  = \  2\cdot \left( z_\alpha   + z_\beta  \right)^2 \cdot  \left(  \frac{1- \Gamma_* p_*^{(0)} }{\Gamma_* p_*^{(0)}} + \frac{q_*^{(0)} }{p_*^{(0)}}  \right) \Bigg/ \log(\Gamma_*)^2 $\\
		\hline 
		\textbf{Odds ratio } & &\\
		Pooled variance &
		$\frac{8}{n^{(0)}  ( \widehat{p}_*^{(0)}+\widehat{p}_*^{(1)}) (\widehat{q}_*^{(0)}+\widehat{q}_*^{(1)}) } $
		&
		$n =  2\cdot \left(  z_\alpha \sqrt{\frac{2}{ \bar{p}_*^{(1)} \bar{q}_*^{(1)}} } +   z_\beta \cdot \sqrt{ \frac{(  q_*^{(0)} +  p_*^{(0)}    \Delta_* )^2}{p_*^{(0)} q_*^{(0)}\Delta_*  } + \frac{1}{ p_*^{(0)} q_*^{(0)}}} \right)^2  \Bigg/ \log(\Delta_*)^2$
		\\
		Unpooled variance & 
		$ \frac{1}{n^{(0)}} \left( \frac{1}{  \hat{p}_*^{(0)} \hat{q}_*^{(0)}} + \frac{1}{  \hat{p}_*^{(1)} \hat{q}_*^{(1)}}\right) $
		&
		$ n  =  2\cdot \left(z_\alpha + z_\beta\right)^2 \cdot \left(\frac{(  q_*^{(0)} +  p_*^{(0)}    \Delta_* )^2}{p_*^{(0)} q_*^{(0)}\Delta_*  } + \frac{1}{ p_*^{(0)} q_*^{(0)}} \right) \Bigg/ \log(\Delta_*)^2$ \\
		\bottomrule 
	\end{tabular} 
	\begin{tablenotes}
		\item[$ \star $] \ where: $\bar{p}_*=\frac{p_*^{(0)}+p_*^{(1)}}{2}$ and $\bar{q}_*=\frac{q_*^{(0)}+q_*^{(1)}}{2}$.
	\end{tablenotes}
	\label{Table.SampleSizes}
\end{table}

\subsection{Sample size derivation based on its margins} \label{Section.DerivationExt}

Analogously to Section \ref{Section.ConvenientSampleSize} and following the notation in Section 4.4,, we  obtain the sample size based on  the   risk ratio as a function of the marginal effects $\mathrm{R}_1$ and $\mathrm{R}_2$,   the event rates $\theta$, and the correlation $\rho$. To do so, we take the event rate and   risk ratio of the composite endpoint for their expressions (which are defined according to $\theta$, $\mathrm{R}_1$, $\mathrm{R}_2$  and $\rho$, see equations \eqref{probCBE} and \eqref{effectCBE.riskratio}), and then substitute these into the sample size formula in \eqref{samplesize.R.unpooled}.
We denote by  $n(\theta,\mathrm{R}_1, \mathrm{R}_2, \rho)$  the needed sample size for evaluating the risk ratio computed for specific values   $\theta$, $\mathrm{R}_1, \mathrm{R}_2$, and  $\rho$.
We will analogously proceed with sample size in terms of  the odds ratio using the   effects $\mathrm{OR}_1$ and $\mathrm{OR}_2$, then denote by $n(\theta,\mathrm{OR}_1, \mathrm{OR}_2, \rho)$  the corresponding sample size.

In what follows, we describe the performance of the sample size  when the effect is measured by odds ratio or risk ratio. Further details of these properties and their empirical proof are to be found  in the web supplementary material.
\begin{itemize} 
	\item For fixed ($\theta,\mathrm{R}_1, \mathrm{R}_2$) or 
	($\theta, \mathrm{OR}_1, \mathrm{OR}_2$), the   sample size   for testing the effect measured by the risk ratio, $n(\theta, \mathrm{R}_1, \mathrm{R}_2, \rho)$, and the   sample size  for testing the odds ratio,   $n(\theta, \mathrm{OR}_1, \mathrm{OR}_2, \rho)$,  are   increasing functions of the correlation $\rho$. 
	\item 
	For given $\mathrm{R}_1$ and $\mathrm{R}_2$   at fixed $\rho=r$, the needed sample size $n(p_1^{(0)},p_2^{(0)},\mathrm{R}_1, \mathrm{R}_2,r)$ to have power $1-\beta$ at a significance level $\alpha$ falls into the interval:
	\begin{eqnarray} \label{CI_ss.Ext}  
		\mathbf{I}(r, I_1, I_2, \mathrm{R}_1, \mathrm{R}_2 )=
		[ \ n(\bar{p}_1^{(0)}, \bar{p}_2^{(0)}, \mathrm{R}_1, \mathrm{R}_2, r), \ \  
		n(\underline{p}_1^{(0)}, \underline{p}_2^{(0)},  \mathrm{R}_1, \mathrm{R}_2, r)  \ ]
	\end{eqnarray}
	Analogously, for given $\mathrm{OR}_1$ and $\mathrm{OR}_2$,  the needed sample size $n(p_1^{(0)},p_2^{(0)},\mathrm{OR}_1, \mathrm{OR}_2,r)$ lies within the interval:
	\begin{eqnarray*}  
		\mathbf{I}(r, I_1, I_2, \mathrm{OR}_1, \mathrm{OR}_2 )=
		[ \ n(\bar{p}_1^{(0)}, \bar{p}_2^{(0)}, \mathrm{OR}_1, \mathrm{OR}_2, r), \ \  
		n(\underline{p}_1^{(0)}, \underline{p}_2^{(0)},  \mathrm{OR}_1, \mathrm{OR}_2, r)  \ ]
	\end{eqnarray*}
	\item 	
	For all $(\pi_1, \pi_2) \in I_1\times I_2$    and   $\rho \in \left( \rho_L(\Theta),\rho_U(\Theta)\right) $, it follows that:
	\begin{eqnarray*} 
		n(\pi_1, \pi_2, \mathrm{R}_1, \mathrm{R}_2,\rho) & \leq & \mathcal{U}_{R}(\Theta) = n(\underline{p}_1^{(0)}, \underline{p}_2^{(0)},  \mathrm{R}_1, \mathrm{R}_2, \rho_U(\Theta)) \\
		n(\pi_1, \pi_2, \mathrm{OR}_1, \mathrm{OR}_2,\rho) & \leq & \mathcal{U}_{OR}(\Theta) = n(\underline{p}_1^{(0)}, \underline{p}_2^{(0)},  \mathrm{OR}_1, \mathrm{OR}_2, \rho_U(\Theta)) 
	\end{eqnarray*}
	Furthermore, for given ($\underline{p}_1^{(0)}, \underline{p}_2^{(0)},\mathrm{R}_1, \mathrm{R}_2$) or  ($\underline{p}_1^{(0)}, \underline{p}_2^{(0)}, \mathrm{OR}_1, \mathrm{OR}_2$), the sample size functions  $n(\underline{p}_1^{(0)}, \underline{p}_2^{(0)},  \mathrm{R}_1, \mathrm{R}_2, \rho ) $ and $n(\underline{p}_1^{(0)}, \underline{p}_2^{(0)},  \mathrm{OR}_1, \mathrm{OR}_2, \rho ) $ increase with respect to the   correlation $\rho$.  
\end{itemize}

Note that,  unlike  when using risk differences, the sample size has its maximum value when both event rates take their lower interval values $\underline{p}_1^{(0)}, \underline{p}_2^{(0)}$ (see equations \eqref{CI_ss} and \eqref{CI_ss.Ext}).

Also note that if the marginal parameters ($\theta, \mathrm{R}_1, \mathrm{R}_2$) or ($\theta, \mathrm{OR}_1, \mathrm{OR}_2$) are anticipated and the correlation is not known,  the sample size strategy described in Section \ref{SSmethod}  can be extended to the risk   and odds ratio and analogously  applied.
For fixed effects ($\mathrm{R}_1,\mathrm{R}_2$) or ($\mathrm{OR}_1,\mathrm{OR}_2$), and given intervals $I_1$ and $I_2$ for the event rates, we can follow  the same reasoning as for risk differences  in Section \ref{Section.Bands}, and use $\mathcal{U}_R(\Theta)$ (analogously $\mathcal{U}_{OR}(\Theta)$) to calculate  the required sample size that  guarantees the planned power while accounting for the unknown correlation value and uncertainty of the marginal parameter values.

\section{A simulation study} \label{Section.simulations}

We conduct a simulation study to evaluate  the strategies proposed in Section \ref{Section.ConvenientSampleSize} for computing the sample size.

\subsection{Design}
We simulate  a  two-arm trial with a composite primary endpoint composed of two events,  $\varepsilon_1$ and  $\varepsilon_2$, according to the following values (which are all summarized in Table \ref{Table_Scenarios}): the marginal probabilities of observing 
$\varepsilon_k$ ($k=1,2)$ in the  control  group $\theta=(p_1^{(0)}, p_2^{(0)})$ take values between $0.01$ and $ 0.2$, and they  are without loss of generality such that $p_1^{(0)}< p_2^{(0)}$; the risk ratios $\lambda =( \mathrm{R}_1, \mathrm{R}_2 )$ are specified  for beneficial effects  and vary from $0.6$ to $0.8$;   the true correlation between $\varepsilon_1$ and  $\varepsilon_2$ is assumed to be common for both groups, and it covers the  positive range between $0$ and $1$.  The possible combinations add up to a total of 421 different scenarios which take into account that for  given  $(\theta,\lambda)$. simulations are   performed only for those $\rho_{true}$ between $B_L(\theta,\lambda)$ and $B_U(\theta,\lambda)$ (see \eqref{corr.bounds}).

For each one of these 421 scenarios  specified by ($\theta$, $\lambda$, $\rho_{true}$),  we compute the required sample size $n(\theta, \lambda, \rho(\theta,\lambda))$ for a one-sided test with   power   $1 -\beta =0.80$ at the significance level $\alpha=0.025$, which is done by following one of the six  different formulations that are  derived in Section \ref{Section.SampleSize}  and  Section  \ref{Section.extension} and, additionally, are all summarized in Table \ref{Table.SampleSizes}.

We distinguish 4 different situations according to the value  we  assume in  $\rho(\theta,\lambda)$ to calculate $n(\theta, \lambda, \rho(\theta,\lambda))$:  
\begin{enumerate}
	\item
	For the weak correlation category, use $ \rho(\theta,\lambda) = B_U(\theta,\lambda)/3 $
	\item For the moderate correlation  category, use $ \rho(\theta,\lambda) = 2 B_U(\theta,\lambda)/3 $
	\item For the strong correlation category, use $ \rho(\theta,\lambda) = B_U(\theta,\lambda) $
	\item For guessing the true correlation, use $ \rho(\theta,\lambda) = \rho_{true}$.
\end{enumerate}

Given one scenario  specified by ($\theta$, $\lambda=(R_1, R_2)$, $\rho_{true}$), we evaluate the type I error  first by calculating $n$ based on   $(\theta, \lambda=(R_1, R_2), \rho(\theta,\lambda))$ and simulating  $100000$ trials using  $(\theta, \lambda=(1, 1), \rho_{true}$).
To check the power, we start by  calculating $n$  as above,  based on   $(\theta, \lambda=(R_1, R_2), \rho(\theta,\lambda))$,  and then we simulate  $100000$ trials using  $(\theta, \lambda=(R_1, R_2), \rho_{true}$). Altogether,  we have to  analyze  a total of 3368  scenarios.

The above steps have to be reproduced six times according to the different sample size formulae used to compute $n(\theta, \lambda, \rho(\theta,\lambda))$, that is, by stating the effect in terms of  the difference in proportions, the  risk ratios or  the  odds ratio, and using both the 
pooled and the  unpooled estimates of the variance.
We have performed all computations using the \texttt{R} software tool (Version 0.98.1087), and  the time required to perform the considered simulations was 55.58h.

\begin{table}[h!]
	\centering
	\caption[position=above]{{\bf Simulation scenarios}: Values of marginal event rates in the control group: $\theta= (p_1^{(0)} ,  p_2^{(0)})$;  treatment effects in terms of the risk ratio: $\lambda=( \mathrm{R}_1, \mathrm{R}_2)$; and correlation $ \rho_{true} $ between components. Note that not all the combinations are feasible because the correlation is between $B_L(\theta, \lambda)$ and $B_U(\theta,\lambda)$. }
	\begin{tabular}{cc} 
		\toprule   
		\textbf{Parameter} & \textbf{Values} \\ \toprule
		$ p_1^{(0)} $ & $0.01, 0.05, 0.10$  \\
		$ p_2^{(0)} $ & $0.01, 0.05, 0.10, 0.15, 0.20$ \\
		$ \rho_{true} $ & $0.0, 0.1, 0.2, 0.3, 0.4, 0.5, 0.6, 0.7, 0.8, 0.9, 1$  \\ \hline
		Effects used to evaluate the power: & \\
		$ \mathrm{R}_1, \mathrm{R}_2  $   & $0.6, 0.7, 0.8$ \\
		Effect used to evaluate the type I error: & \\ 
		$ \mathrm{R}_1= \mathrm{R}_2  $ & $1$ \\ 
		\bottomrule 
	\end{tabular}  
	\label{Table_Scenarios}
\end{table}

\subsection{Power analysis of the proposed strategies for computing sample size}

Let $n_{l,m}$ be the required sample size calculated using the formulae described in Table \ref{Table.SampleSizes} where $l=p,u$, indicating    whether the pooled or unpooled variance has been used; and  $m=D,R,OR$, indicating  the effect measure that has been tested. In other words, for the difference in proportions,  $m=D$; for relative risk, $m=R$; and for odds ratio, $m=OR$. 
Let   $\Psi_{l,m} $  denote   the empirical power    when the total number of participants is $n_{l,m}$ ($l=p,u$;  \ $m=D,R,OR$).

Whenever the correlation we are using to compute the sample size coincides with the one we have used to run the simulations ($ \rho(\theta,\lambda)=\rho_{true}$), the empirical powers are always  achieved whether we are using the pooled,  $\Psi_{p,m}$, or unpooled, $\Psi_{u,m}$, estimator of the variance. Nevertheless,   when 
testing the difference in proportions, the achieved powers do not 
substantially differ ($\Psi_{u,D}\cong\Psi_{p,D}$);
when testing the treatment differences in terms of the  risk ratio or the  odds ratio, the power achieved if  the unpooled variance estimator is  used is slightly larger than the power achieved with the pooled estimator,   $\Psi_{u,m}\leq\Psi_{p,m}$, $m=R, OR$ (see Table S3 
in supplementary material for a comparison of the  two approaches).  
The results presented herein refer to the unpooled variance estimator. The corresponding results for   the pooled variance are summarized in the supplementary material (Table S4 and Figure S1).

When $ \rho(\theta,\lambda) \neq \rho_{true} $, we distinguish two types of
misspecification.  Misspecification type I, $\rho_{true}$ and $\rho(\theta,\lambda)$   pertain to the same correlation category;
and Misspecification type II, $\rho_{true}$ and $\rho(\theta,\lambda)$  do not belong to the same category. 
Table \ref{Table2_EmpiricalPowerUNPOOLED} describes   the empirical power in  these two cases, which account for  the correlation category  for   the three    effect measures that we could use to test the difference between groups. 
If Misspecification I occurs, the pre-specified power is achieved and   might  exceed   $7\%$. 

For misspecification II,  there are two possible scenarios. The first is for those cases where the correlation $\rho(\theta, \lambda)$ is assumed in a stronger correlation category than   the one that $\rho_{true}$ belongs to, for instance, if $\rho(\theta, \lambda)$ is assumed to be strong and $\rho_{true}$ is moderate. Under this scenario, $\rho(\theta, \lambda)>\rho_{true}$, and then the planned power is always achieved. The second scenario is when the $\rho(\theta, \lambda)$ is assumed to be in a weaker correlation category than the one that $\rho_{true}$ lies in. For instance, when $\rho(\theta, \lambda)$ is assumed weak and $\rho_{true}$ is moderate. In those cases where $\rho(\theta, \lambda)<\rho_{true}$,   the trial will be underpowered.

The empirical power in terms of the difference between the assumed and true correlations is illustrated 
in Figure \ref{fig:scatterplotspower_Unpooled}. Observe  
that when the assumed correlation is greater than the true correlation, that is,  $\rho(\theta,\lambda)>\rho_{true}$,   the empirical power is equal to or greater  than the pre-specified power.
Note that in all cases under the  strong correlation category we have $\rho_{true} \leq \rho(\theta,\lambda)$, the pre-specified power is assured  even though  	we  failed to anticipate  	the    category.
Also note that there are no differences in the achieved power, nor are there any in the   method's performance in terms of the  measure we are  using to evaluate the effect. 

\begin{table}[h!]
	\centering
	\caption[position=above]{Median   empirical power, given the sample size (under the unpooled variance), depending on the misspecification error	and the assumed correlation.    
		Values in parentheses indicate the maximum and minimum of the empirical power.}
	\begin{tabular}{c|cc}
		\hline 
		\multirow{2}{*}{\textbf{Assumption}} &
		\textbf{Misspecification I:  } & \textbf{Misspecification II:  }  \\
		&
		\textbf{Correlation within the category} & \textbf{Correlation outside  the category}  \\ 
		\toprule  
		\textbf{Risk Difference} & & \\  
		Weak &   $ 0.82 $ $(0.80,0.86)$ &  $ 0.78 $ $(0.67,0.80)$ \\ 
		Moderate & $ 0.82 $ $(0.80,0.87)$ &  $ 0.82 $ $(0.74,0.91)$ \\ 
		Strong  & $0.82$ $(0.80,0.87)$  & $0.87$ $(0.81,0.95)$ \\
		\bottomrule
		\textbf{Risk Ratio} & & \\ 
		Weak &   $ 0.82 $ $(0.80,0.86)$ &  $ 0.78 $ $(0.67,0.81)$ \\ 
		Moderate & $ 0.82 $ $(0.80,0.87)$ &  $ 0.82 $ $(0.74,0.90)$ \\ 
		Strong  & $0.82$ $(0.80,0.87)$  & $0.88$ $(0.81,0.95)$ \\
		\bottomrule
		\textbf{Odds Ratio} & & \\   
		Weak &   $ 0.82 $ $(0.80,0.86)$ &  $ 0.78 $ $(0.67,0.81)$ \\ 
		Moderate & $ 0.82 $ $(0.80,0.87)$ &  $ 0.82 $ $(0.74,0.91)$ \\ 
		Strong  & $0.82$ $(0.80,0.87)$  & $0.87$ $(0.81,0.95)$ \\
		\bottomrule
	\end{tabular}  
	\label{Table2_EmpiricalPowerUNPOOLED}
\end{table}

\begin{figure}[h!]
	\centering
	\includegraphics[width=1.08\linewidth]{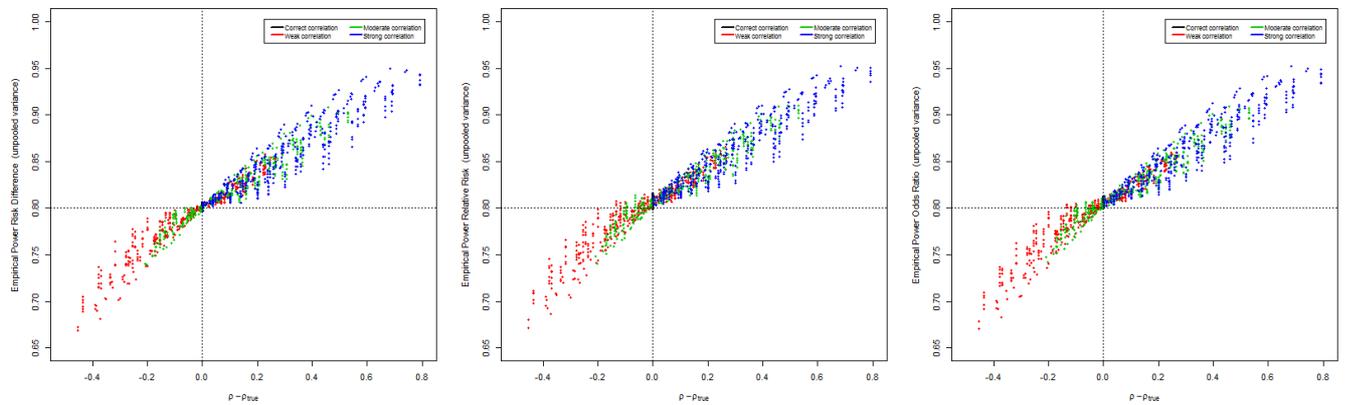}
	\caption{Scatterplot showing the relationship between   empirical power versus the difference between the assumed   and   true correlations for each of the sample size formulas (under unpooled variance) that were used in the simulation  study in section \ref{Section.simulations}.} 
	\label{fig:scatterplotspower_Unpooled} 
\end{figure}

\subsection{Type I error analysis of the proposed strategies for computing sample size }

The empirical results in the simulation study show that the   type I error is not affected by the misspecification of the correlation. Nonetheless, the   empirical type I error under  the pooled estimator may be slightly superior to    significance level $ 0.05$, especially when the treatment is tested in terms of risk ratio and odds ratio (see Figure S3 in the online support  material).


\section{Discussion}

Composite endpoints are increasingly used as primary endpoints to achieve greater incidence rates of observing the primary event, larger effect sizes and, hopefully, higher statistical power while avoiding multiplicity adjustment. Even so, their use creates   challenges in both the design and interpretation of the studies.

It is well known that sample size determination plays a key role in trial design. We have shown that calculating the sample size for composite binary endpoints needs more than the anticipated effect size and event rates of the composite components; it also needs the correlation between them. Sizing clinical trials in which composite endpoints are involved often implies facing the challenge of dealing with the unknown value of the correlation. We have assessed how much the correlation impacts the sample size  and, consequently, the attained power. Our conclusion is that the sample size strongly depends on the correlation and that the more correlation between the components are, the more sample size is needed. Motivated by this concern, we have proposed some strategies for deriving the sample size when the correlation is not specified. The strategy, based on the stratification of the correlation into different categories, assures the pre-specified power even without previous knowledge on the correlation. In addition,  if at least we could anticipate the  category where the correlation falls into, the achieved power would slightly surpass the planned power (see Table \ref{Table2_EmpiricalPowerUNPOOLED}). In those cases where not even  the correlation category can be anticipated, the interval of plausible values for the sample size might be too wide and  the proposed strategy might  be extremely conservative. Further research is needed in such cases to obtain more accurate power. 

We have illustrated our proposal using the platform CompARE. CompARE extends and includes a previous CompARE web page for composite time-to-event endpoints that was created for handling the relative efficiency of comparing treatment groups in terms of the composite endpoint versus one of its components as the primary endpoint. CompARE allows for the implementation of the Asymptotic Relative Efficiency method, which quantifies the gain in efficiency from using the composite endpoint over one of its components \cite{GomezLagakos, BofillGomez}, doing so specifically on the basis of information about the different endpoints and anticipated values.

Throughout this work, we have assumed that we are in the planning stage of a randomized clinical trial whose aim is to test the efficacy of a new treatment by comparing its performance with others that have already been approved. These trials are usually known to have much larger sample sizes. For this reason, we have restricted this work to sample size calculation based upon asymptotic approximations of the normal distribution. In previous trial phases devoted to   obtain  the optimal dose level or to  study the toxicity of the new drug, the sample size is not as large as in efficacy trials. In those cases, it could be more appropriate to base the sample size calculations on an exact test. Unlike the tests based on asymptotic distributions, the power function of an exact test  usually does not have an explicit form, and  the sample size  is obtained numerically by greedy search over the sample space. In practice, the applicability of such methods can come across difficulties because    intensive computing is required\cite{Chow2008}. There is controversy   over whether or not to use  exact tests, since when the sample size is not large enough,   the asymptotic test may not preserve the test size,   whereas exact tests could be conservative\cite{Fagerland2015,Crans2008,Andres2009}.

The sample size calculation in this work has been derived using the same correlations for both groups. Although this assumption is very often being used\cite{Sugimoto2017, Asakura2017, Ando2015},  it remains to be studied how plausible is in practice. We are working on an extension of our  methods   to account for different group correlations. Moreover, we are currently studying and implementing in CompARE other association measures for characterizing the strength of dependence between pairs of binary endpoints, such as the relative overlap \cite{Marsal2015},  which in practice might be easier to anticipate.

Interpreting the results of a trial with a primary composite endpoint is particularly challenging.  Composite endpoints comprise the information of its components and   capture a more complex picture of the intervention's efficacy, however, they might oversimplify  the evidence by looking only at the composite effect\cite{Pocock2015}. A proper study of the contribution from each separate component should be conducted to ensure a clear understanding of the results. What is more, composite endpoints are in many cases formed by a set of endpoints among whom  the clinical relevance highly differs. This could lead to misleading results about whether the treatment benefits only the less important endpoints. Moreover, as shown in Section \ref{Section.extension}, the effect for the composite does not necessarily reflect the effects for the components. CompARE computes the effect on the composite endpoint and gets constructive numerical and graphical results in order to investigate the role that each component plays.

Comparisons between two groups when using composite endpoints could be based either on the comparison of the corresponding proportions  by means, for instance, of a two-proportion z-test or based on the  comparison of  survival summaries  by means, for instance, of a log-rank test.   Although this paper focuses on composite binary endpoints, we want to emphasize that time-to-first-event  endpoints are  very often used  and that comparisons based on proportions test could be   less powerful than those based on full survival information \cite{Ryan1985,Buyse1982,Cuzick1982}.

Different strategies such as the win ratio\cite{Pocock2012, Luo2015} and the weighted combined approach\cite{Rauch2017} have been developed to    take into consideration the order of clinical priorities for the composite components when analyzing composite endpoints. Extending this work to more than two   components and  by incorporating weights remains open for future research.


\section*{Acknowledgments}
We would like to thank the referees for all their comments which have resulted in a clear improvement of the manuscript. We would also like to thank Dr A. Mart\'{i}n Andr\'{e}s   and the Spanish Network of Biostatistics \textit{Biostatnet}  (MTM2015-69068-REDT and MTM2017-90568-REDT).  

This work is partially supported by grants MTM2015-64465-C2-1-R (MINECO/FEDER) from the Ministerio de Econom\'{i}a y Competitividad (Spain), and 2017 SGR 622 (GRBIO) from the Departament d'Economia i Coneixement de la Generalitat de Catalunya (Spain). M. Bofill Roig acknowledges financial support from the Spanish Ministry of Economy and Competitiveness, through the Mar\'{i}a de Maeztu Programme for Units of Excellence in R\&D (MDM-2014-0445).


\subsection*{Financial disclosure}

None reported.

\subsection*{Conflict of interest}

The authors declare no potential conflict of interests.

\section*{Supporting information} 

\noindent
Additional figures and tables may be found online  in the supplementary document.
Source  code  for implementing and reproducing  the procedures discussed in this article is available at \url{https://github.com/MartaBofillRoig/CompARE}.

\vspace{10mm}

\appendix

\noindent
\textbf{Appendix}

Let $X_{ijk}$   denote the response  of the $k$-th binary
endpoint for the $j$-th patient in the $i$-th group of treatment ($
i=0,1 $, $j=1,...,n$, $k=1,2$). We denote by $X_{ij*}$ the composite response defined as
\begin{eqnarray} \label{CE_Def} 
	X_{ij*} &=&
	\begin{cases}
		1, \text{ if } X_{ij1} + X_{ij2} \geq 1 \\
		0, \text{ if else } X_{ij1} + X_{ij2} = 0
	\end{cases}
\end{eqnarray} 
W denote by $	 p_1^{(i)} = \mathrm{P}(X_{ij1} =1)=1- q_1^{(i)} $, 	$ p_2^{(i)} =
\mathrm{P}(X_{ij2} =1) =1- q_2^{(i)} $ and 	 $p_*^{(i)} = \mathrm{P}(X_{ij*} =1) =1- q_*^{(i)} $
the probabilities of observing  each endpoint  in the $i$-th  group.  
Let  $O_k^{(0)}, \delta_k, \mathrm{R}_k, \mathrm{OR}_k$ be the odds under the control group,
the risk difference, risk ratio and odds ratio, respectively, for the $k$-th endpoint, that is, 
$O_k^{(0)} = \frac{ p_k^{(0)}  }{ q_k^{(0)}   }$,
$\delta_k = p_k^{(1)}-p_k^{(0)}$,
$\mathrm{R}_k = \frac{p_k^{(1)}}{p_k^{(0)}}$, and
$\mathrm{OR}_k = \frac{p_k^{(1)}/q_k^{(1)} }{p_k^{(0)}/q_k^{(0)}}$.
We denote by $\theta= (p_1^{(0)},p_2^{(0)})$ the vector of marginal event rates, and $\lambda= (\delta_1,\delta_2)$ the vector of effect sizes.

Let $\rho^{(i)} $ represent the correlation between  $X_{ij1}$ and $X_{ij2}$ in the $i$-th treatment group, and $\rho$ refer to the correlation when it is  assumed to be equal in both groups, i.e., $\rho=\rho^{(0)}=\rho^{(1)}$.

\section{Derivation of the composite effect from the margins}\label{app.treateffect}

We derive the expression for the composite treatment effect  in terms of the marginal component and the correlation described in Sections  \ref{Section.Notation} and \ref{Section.extension},  and we prove the monotone performance of the risk difference with respect to the correlation $\rho$.

\begin{theorem}[Composite effect from margins]\label{thm_CBEeffect}
	The composite effect for the composite endpoint can be expressed in terms of the component parameters as follows:
	\begin{enumerate}[(i)]
		\item The risk difference for the composite endpoint, $\delta_*$, is
		determined by the  six parameters 
		$p_1^{(0)}$, $p_2^{(0)}$, $\delta_1$,
		$\delta_2$, $\rho^{(0)}$, $\rho^{(1)}$  and has the
		following expression:
		\begin{equation} \label{diff.prop}
			\delta_* = \delta_1 q_2^{(0)} + \delta_2 q_1^{(0)}  - \delta_1 \delta_2 + \rho^{(0)} \sqrt{p_1^{(0)}p_2^{(0)}q_1^{(0)}q_2^{(0)}} - \rho^{(1)} \sqrt{(p_1^{(0)}+\delta_1)(p_2^{(0)}+\delta_2)(q_1^{(0)}-\delta_1)(q_2^{(0)}-\delta_2)}
		\end{equation}
		
		\item The risk ratio for the composite endpoint, $\mathrm{R}_*$, is
		determined by the six parameters 
		$p_1^{(0)}$, $p_2^{(0)}$, $\mathrm{R}_1$,
		$\mathrm{R}_2$, $\rho^{(0)}$, $\rho^{(1)}$  and has the
		following expression: 
		\begin{equation}  \label{eqRR} 
			\mathrm{R}_* = \frac{p_1^{(0)}\mathrm{R}_1 + p_2^{(0)}\mathrm{R}_2- p_1^{(0)} p_2^{(0)}\mathrm{R}_1\mathrm{R}_2 - \rho^{(1)}  \sqrt{p_1^{(0)}\mathrm{R}_1  p_2^{(0)}\mathrm{R}_2 (1-p_1^{(0)}\mathrm{R}_1) (1-p_2^{(0)}\mathrm{R}_2) }}{1-  q_1^{(0)} q_2^{(0)} - \rho^{(0)}  \sqrt{p_1^{(0)} p_2^{(0)} q_1^{(0)} q_2^{(0)}}} 
		\end{equation}
		
		\item The odds ratio for the composite endpoint, $\mathrm{OR}_*$, is
		determined by the six parameters 
		$p_1^{(0)}$, $p_2^{(0)}$, $\mathrm{OR}_1$,
		$\mathrm{OR}_2$, $\rho^{(0)}$, $\rho^{(1)}$  and has the
		following expression: 
		\begin{equation} \label{eqOR} 
			\mathrm{OR}_* = 
			\frac{
				\left( \left(  1 +   \frac{\mathrm{OR}_1 p_1^{(0)}}{1-p_1^{(0)}}\right)  \left(  1 +  \frac{\mathrm{OR}_2 p_2^{(0)}}{1-p_2^{(0)}}\right) -1 - \rho^{(1)} \sqrt{   \frac{\mathrm{OR}_1 \mathrm{OR}_2  p_1^{(0)} p_2^{(0)}}{(1-p_1^{(0)})(1-p_2^{(0)})}   }
				\right)\cdot \left( 1+ \rho^{(0)} \sqrt{ \frac{p_1^{(0)} p_2^{(0)}}{(1-p_1^{(0)})(1-p_2^{(0)})}}\right) 
			}{ \left( \left( 1+ \frac{p_1^{(0)}}{(1-p_1^{(0)})}\right) \cdot \left( 1+ \frac{p_2^{(0)}}{(1-p_2^{(0)})}\right)  -1 - \rho^{(0)} \sqrt{\frac{p_1^{(0)} p_2^{(0)}}{(1-p_1^{(0)})(1-p_2^{(0)})}}\right) \cdot \left( 1+ \rho^{(1)} \sqrt{  \frac{\mathrm{OR}_1 \mathrm{OR}_2  p_1^{(0)} p_2^{(0)}}{(1-p_1^{(0)})(1-p_2^{(0)})}  }\right) }     
		\end{equation}	
		
	\end{enumerate} 
\end{theorem}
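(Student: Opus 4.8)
The plan is to reduce all three identities to a single structural fact about the joint law of the two Bernoulli components, and then carry out the appropriate algebraic substitution for each effect measure. Throughout, the only probabilistic input is the formula for $\mathrm{P}(X_{ij1}=1,X_{ij2}=1)$ in terms of the marginals and the correlation; everything after that is substitution and simplification.

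First I would record the key bivariate identity. Since a Bernoulli variable $X_{ijk}$ has variance $p_k^{(i)}q_k^{(i)}$, the definition of Pearson's coefficient gives $\mathrm{Cov}(X_{ij1},X_{ij2}) = \rho^{(i)}\sqrt{p_1^{(i)}p_2^{(i)}q_1^{(i)}q_2^{(i)}}$, hence
\begin{equation*}
\mathrm{P}(X_{ij1}=1,\,X_{ij2}=1) = \mathrm{E}[X_{ij1}X_{ij2}] = p_1^{(i)}p_2^{(i)} + \rho^{(i)}\sqrt{p_1^{(i)}p_2^{(i)}q_1^{(i)}q_2^{(i)}}.
\end{equation*}
By \eqref{CE_Def} the event $\{X_{ij*}=0\}$ coincides with $\{X_{ij1}=0,\,X_{ij2}=0\}$, so inclusion--exclusion gives $q_*^{(i)} = 1 - p_1^{(i)} - p_2^{(i)} + \mathrm{P}(X_{ij1}=1,X_{ij2}=1) = q_1^{(i)}q_2^{(i)} + \rho^{(i)}\sqrt{p_1^{(i)}p_2^{(i)}q_1^{(i)}q_2^{(i)}}$, and therefore $p_*^{(i)} = 1 - q_1^{(i)}q_2^{(i)} - \rho^{(i)}\sqrt{p_1^{(i)}p_2^{(i)}q_1^{(i)}q_2^{(i)}}$, which is \eqref{probCBE}. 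Evaluated at $i=0,1$, this is the common starting point for all three parts.

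For part (i) I would compute $\delta_* = p_*^{(1)}-p_*^{(0)}$ directly from \eqref{probCBE}, substitute $p_k^{(1)} = p_k^{(0)}+\delta_k$ and $q_k^{(1)} = q_k^{(0)}-\delta_k$, and use the expansion $q_1^{(0)}q_2^{(0)} - (q_1^{(0)}-\delta_1)(q_2^{(0)}-\delta_2) = \delta_1 q_2^{(0)} + \delta_2 q_1^{(0)} - \delta_1\delta_2$, which yields \eqref{diff.prop}. For part (ii) the substitution is $p_k^{(1)} = \mathrm{R}_k p_k^{(0)}$, $q_k^{(1)} = 1-\mathrm{R}_k p_k^{(0)}$, and the expansion $1 - (1-\mathrm{R}_1 p_1^{(0)})(1-\mathrm{R}_2 p_2^{(0)}) = \mathrm{R}_1 p_1^{(0)} + \mathrm{R}_2 p_2^{(0)} - \mathrm{R}_1\mathrm{R}_2 p_1^{(0)}p_2^{(0)}$ produces the numerator of $\mathrm{R}_* = p_*^{(1)}/p_*^{(0)}$, while the denominator is $p_*^{(0)}$ as in \eqref{probCBE}; collecting the square-root term gives \eqref{eqRR}.

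Part (iii) needs one reparametrization. Writing $O_k^{(i)} = p_k^{(i)}/q_k^{(i)}$ so that $q_k^{(i)} = (1+O_k^{(i)})^{-1}$ and $\sqrt{p_k^{(i)}q_k^{(i)}} = \sqrt{O_k^{(i)}}/(1+O_k^{(i)})$, the expression for $q_*^{(i)}$ above rearranges to
\begin{equation*}
q_*^{(i)} = \frac{1 + \rho^{(i)}\sqrt{O_1^{(i)}O_2^{(i)}}}{(1+O_1^{(i)})(1+O_2^{(i)})}, \qquad \frac{p_*^{(i)}}{q_*^{(i)}} = \frac{(1+O_1^{(i)})(1+O_2^{(i)}) - 1 - \rho^{(i)}\sqrt{O_1^{(i)}O_2^{(i)}}}{1 + \rho^{(i)}\sqrt{O_1^{(i)}O_2^{(i)}}}.
\end{equation*}
I would then substitute $O_k^{(0)} = p_k^{(0)}/(1-p_k^{(0)})$ and, from $\mathrm{OR}_k = O_k^{(1)}/O_k^{(0)}$, also $O_k^{(1)} = \mathrm{OR}_k\,p_k^{(0)}/(1-p_k^{(0)})$, and form the ratio $\mathrm{OR}_* = (p_*^{(1)}/q_*^{(1)})\big/(p_*^{(0)}/q_*^{(0)})$, which is exactly \eqref{eqOR}. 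As a by-product, specializing \eqref{diff.prop} to $\rho^{(0)}=\rho^{(1)}=\rho$ exhibits $\delta_*$ as an affine function of $\rho$ with slope $\sqrt{p_1^{(0)}p_2^{(0)}q_1^{(0)}q_2^{(0)}} - \sqrt{(p_1^{(0)}+\delta_1)(p_2^{(0)}+\delta_2)(q_1^{(0)}-\delta_1)(q_2^{(0)}-\delta_2)}$, which is nonnegative when $p_k^{(0)}<\tfrac12$ and $\delta_k<0$ since $x\mapsto x(1-x)$ is increasing on $(0,\tfrac12)$, giving the claimed monotonicity. None of the steps is genuinely hard; the two points needing care are (a) confirming that each parametrization keeps all four cell probabilities in $(0,1)$ so the square roots are meaningful --- this is precisely what the admissibility bounds $B_L,B_U$ of \eqref{corr.bounds} encode --- and (b) organizing the algebra in part (iii), where it is the passage to the odds $O_k^{(i)}$ that collapses the otherwise unwieldy odds-ratio expression into the stated closed form, so that is the step I would set up most carefully.
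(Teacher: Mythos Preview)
Your proposal is correct and follows essentially the same route as the paper: both start from the identity $p_*^{(i)} = 1 - q_1^{(i)}q_2^{(i)} - \rho^{(i)}\sqrt{p_1^{(i)}p_2^{(i)}q_1^{(i)}q_2^{(i)}}$, handle (i) and (ii) by the substitutions $p_k^{(1)}=p_k^{(0)}+\delta_k$ and $p_k^{(1)}=\mathrm{R}_k p_k^{(0)}$, and do (iii) by passing to the odds $O_k^{(i)}$ and using $O_k^{(1)}=\mathrm{OR}_k O_k^{(0)}$. Your compact formula $q_*^{(i)} = \bigl(1+\rho^{(i)}\sqrt{O_1^{(i)}O_2^{(i)}}\bigr)\big/\bigl((1+O_1^{(i)})(1+O_2^{(i)})\bigr)$ is a slightly tidier way to organize the same algebra the paper carries out, and your closing remark on monotonicity is exactly the content of the paper's subsequent Theorem~\ref{thm_diff}.
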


\begin{proof}[Proof of Theorem~\ref{thm_CBEeffect}]
	(i), (ii) The two expressions \eqref{diff.prop} and \eqref{eqRR} follow in a straightforward manner after noting that:
	\begin{eqnarray} \label{CE_prob} 
		p^{(i)}_{*} &=& 1-  q_1^{(i)} q_2^{(i)} - \rho^{(i)}    \sqrt{p_1^{(i)} p_2^{(i)} q_1^{(i)} q_2^{(i)}} 
		= p_1^{(i)} + p_2^{(i)} -p_1^{(i)}   p_2^{(i)}  - \rho^{(i)}   \sqrt{p_1^{(i)} p_2^{(i)} q_1^{(i)} q_2^{(i)}}
	\end{eqnarray}
	and taking into account $p_k^{(1)} = \delta_k + p_k^{(0)}$ and $p_k^{(1)}=p_k^{(0)}\mathrm{R}_1$.
	
	(iii) Replacing the probabilities of the composite endpoint with its expression in terms of the marginal parameters plus the correlation \eqref{CE_prob}, we have:
	\begin{eqnarray*} 
		\mathrm{OR}_* &=& \left(  
		\frac{1- q_1^{(1)} q_2^{(1)} - \rho^{(1)} \sqrt{ \frac{ p_1^{(1)} p_2^{(1)} }{ q_1^{(1)} q_2^{(1)} }  }  
		}{ q_1^{(1)} q_2^{(1)} + \rho^{(1)} \sqrt{ \frac{ p_1^{(1)} p_2^{(1)} }{ q_1^{(1)} q_2^{(1)} }}   
		}
		\right) \cdot \left( 
		\frac{1- q_1^{(0)} q_2^{(0)} - \rho^{(0)} \sqrt{ \frac{ p_1^{(0)} p_2^{(0)} }{ q_1^{(0)} q_2^{(0)} }  }  
		}{ q_1^{(0)} q_2^{(0)} + \rho^{(0)} \sqrt{ \frac{ p_1^{(0)} p_2^{(0)} }{ q_1^{(0)} q_2^{(0)} }}  
		} 
		\right)^{-1}  = 
		\left(   
		\frac{\frac{1}{q_1^{(1)} q_2^{(1)} } -1 - \rho^{(1)} \sqrt{ \frac{ p_1^{(1)} p_2^{(1)} }{ q_1^{(1)} q_2^{(1)} }  }}{1+ \rho^{(1)} \sqrt{ \frac{ p_1^{(1)} p_2^{(1)} }{ q_1^{(1)} q_2^{(1)} }  } }
		\right)  \left(  
		\frac{\frac{1}{q_1^{(0)} q_2^{(0)} } -1 - \rho^{(0)} \sqrt{ \frac{ p_1^{(0)} p_2^{(0)} }{ q_1^{(0)} q_2^{(0)} }  }}{1+ \rho^{(0)} \sqrt{ \frac{ p_1^{(0)} p_2^{(0)} }{ q_1^{(0)} q_2^{(0)} }  } }
		\right)^{-1} 
	\end{eqnarray*}
	Notice that:
	\begin{eqnarray*} \nonumber 
		\frac{1}{ q_1^{(i)}  q_2^{(i)}} &=& (1+ O_1^{(i)}) (1+ O_2^{(i)}) , 	  \hspace{5mm}	
		\frac{1}{ q_1^{(1)} q_2^{(1)} }  \ = \  ( 1 + \mathrm{OR}_1 O_1^{(0)}) ( 1 + \mathrm{OR}_2 O_2^{(0)})
		, 	  \hspace{8mm}	 
		\frac{ p_1^{(1)} p_2^{(1)} }{q_1^{(1)} q_2^{(1)}} =
		\ \ \mathrm{OR}_1 \mathrm{OR}_2  O_1^{(0)} O_2^{(0)} 
	\end{eqnarray*}
	Hence:
	\begin{eqnarray*} 
		\mathrm{OR}_* &=& 
		\left( \frac{ ( 1 + \mathrm{OR}_1 O_1^{(0)}) ( 1 + \mathrm{OR}_2 O_2^{(0)}) -1 - \rho^{(1)} \sqrt{ \mathrm{OR}_1 \mathrm{OR}_2  O_1^{(0)} O_2^{(0)}  }}{1+ \rho^{(1)} \sqrt{ \mathrm{OR}_1 \mathrm{OR}_2  O_1^{(0)} O_2^{(0)}  }}\right) 
		\cdot
		\left( \frac{(1+ O_1^{(0)}) (1+ O_2^{(0)}) -1 - \rho^{(0)} \sqrt{ O_1^{(0)} O_2^{(0)}  }}{1+ \rho^{(0)} \sqrt{ O_1^{(0)} O_2^{(0)} }} \right)^{-1}  
	\end{eqnarray*}
	By replacing $O_k^{(0)}$ by $\frac{p_k^{(0)}}{1-p_k^{(0)}}$, we obtain \eqref{eqOR}. 
\end{proof}

\begin{theorem}[Risk difference performance]\label{thm_diff}
	Assume that $p_k^{(0)}<1/2$ and $\delta_k<0$ ($k=1,2$).
	We denote by $\delta_*(\rho, \theta, \lambda)$ the risk difference for the composite endpoint function described in 
	\eqref{diff.prop}, specifically
	in terms of the vector of event rates $\theta$, the marginal effects $\lambda$ and the correlation $\rho$.
	Then, the risk difference for the composite endpoint for a given $\theta$ and $\lambda$  is  an increasing  function with respect to  $\rho$.
\end{theorem}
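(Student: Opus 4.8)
The plan is to exploit the fact that, once the two group correlations are identified ($\rho^{(0)}=\rho^{(1)}=\rho$), the right-hand side of \eqref{diff.prop} is an \emph{affine} function of $\rho$. Writing $p_k^{(1)}=p_k^{(0)}+\delta_k$ and $q_k^{(1)}=q_k^{(0)}-\delta_k=1-p_k^{(1)}$, equation \eqref{diff.prop} becomes
\begin{equation*}
\delta_*(\rho,\theta,\lambda)=\bigl(\delta_1 q_2^{(0)}+\delta_2 q_1^{(0)}-\delta_1\delta_2\bigr)+\rho\,m,\qquad m:=\sqrt{p_1^{(0)}p_2^{(0)}q_1^{(0)}q_2^{(0)}}-\sqrt{p_1^{(1)}p_2^{(1)}q_1^{(1)}q_2^{(1)}}.
\end{equation*}
Hence $\partial\delta_*/\partial\rho=m$ is constant in $\rho$, and the whole statement reduces to checking that $m>0$ (equivalently $m\ge 0$, if one reads ``increasing'' in the weak sense).

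Since both radicands are nonnegative and $\sqrt{\cdot}$ is increasing, $m>0$ is equivalent to the product inequality
\begin{equation*}
\bigl(p_1^{(0)}q_1^{(0)}\bigr)\bigl(p_2^{(0)}q_2^{(0)}\bigr)>\bigl(p_1^{(1)}q_1^{(1)}\bigr)\bigl(p_2^{(1)}q_2^{(1)}\bigr).
\end{equation*}
To get this I would invoke the elementary fact that $g(x)=x(1-x)$ is strictly increasing on $[0,\tfrac12)$ (indeed $g(b)-g(a)=(b-a)(1-a-b)>0$ for $a<b<\tfrac12$). The hypotheses $\delta_k<0$ and $p_k^{(0)}<\tfrac12$, together with $p_k^{(1)}\in[0,1]$ being a probability, give $0\le p_k^{(1)}=p_k^{(0)}+\delta_k<p_k^{(0)}<\tfrac12$; therefore $p_k^{(1)}q_k^{(1)}=g(p_k^{(1)})<g(p_k^{(0)})=p_k^{(0)}q_k^{(0)}$ for $k=1,2$. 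Multiplying these two strict inequalities between positive numbers yields the displayed product inequality, so $m>0$ and $\delta_*(\cdot,\theta,\lambda)$ is strictly increasing in $\rho$.

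There is essentially no hard step here: the only things demanding care are the bookkeeping that exhibits \eqref{diff.prop} as affine in $\rho$ and the recognition that the monotonicity of $x\mapsto x(1-x)$ on the left half of $[0,1]$ is precisely what the assumptions $p_k^{(0)}<1/2$ and $\delta_k<0$ are designed to activate — if some $p_k^{(0)}$ exceeded $1/2$ the sign of $m$ could reverse, so these hypotheses are genuinely needed. It is also worth noting that this same affine-in-$\rho$ structure is what later drives the sample-size bounds in Section~\ref{section.SSbehavior}.
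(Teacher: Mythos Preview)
Your proposal is correct and follows essentially the same route as the paper: write $\delta_*$ as an affine function of $\rho$ with slope $m=\sqrt{p_1^{(0)}p_2^{(0)}q_1^{(0)}q_2^{(0)}}-\sqrt{p_1^{(1)}p_2^{(1)}q_1^{(1)}q_2^{(1)}}$, then show $m>0$ by reducing to $p_k^{(1)}q_k^{(1)}<p_k^{(0)}q_k^{(0)}$ via the monotonicity of $x\mapsto x(1-x)$ on $[0,\tfrac12)$. Your presentation is arguably a bit cleaner---you name the map $g(x)=x(1-x)$ and use $g(b)-g(a)=(b-a)(1-a-b)$ directly---whereas the paper arrives at the same factorization through a short algebraic manipulation, but the substance is identical.
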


\begin{proof}[Proof of Theorem~\ref{thm_diff}]
	Observe that the difference in proportions \eqref{diff.prop} can be written as:  $$\delta_*(\rho, \theta, \lambda) = x(\theta, \lambda) + \rho \cdot y(\theta, \lambda).$$
	where: $x(\theta, \lambda) =\delta_1 q_2^{(0)} + \delta_2 q_1^{(0)}  - \delta_1 \delta_2$, and $y(\theta, \lambda) = \sqrt{p_1^{(0)}p_2^{(0)}q_1^{(0)}q_2^{(0)}} - \sqrt{p_1^{(1)}p_2^{(1)}q_1^{(1)}q_2^{(1)}}$.
	Then:
	$ \delta_*(\rho + \epsilon; \theta) - \delta_*(\rho; \theta)  = \epsilon \cdot  y(\theta, \lambda)$.
	Therefore, $ \delta_*(\rho; \theta)$ is an increasing function if and only if $ y(\theta, \lambda)>0$, $\forall \lambda, \theta$, which  is equivalent to showing that:
	\begin{eqnarray*}
		\frac{p_1^{(0)}p_2^{(0)}q_1^{(0)}q_2^{(0)}}{p_1^{(1)}p_2^{(1)}q_1^{(1)}q_2^{(1)}} >1
	\end{eqnarray*}
	It is enough to prove that for $k=1,2$,
	$$
	\frac{p_k^{(1)} q_k^{(1)} }{p_k^{(0)} q_k^{(0)} } <1
	$$
	To ease the notation call $p_k^{(1)}=a$ and $p_k^{(0)}=b$; and, by assuming $a<b<1/2$, that implies $a-b<0$ and $a+b<1$. We need to prove that:
	\begin{eqnarray*}
		\frac{a(1-a)}{b(1-b)} = \frac{a-a^2}{b-b^2} <1 \Leftrightarrow  b-a<b^2-a^2 =( b+ a)(b-a)
	\end{eqnarray*}
	Since $b-a>0$ and $a+b<1$, then we have $(a+b)(b-a)<(b-a)$. As a consequence $y(\theta, \lambda) >0$ and the risk difference of the composite endpoint is an increasing function with respect to the correlation. 
\end{proof}

\section{Derivation of the  sample size for the composite binary endpoint} \label{app.samplesizeCBE}

We establish the sample size formulae for the composite endpoint in terms of the margins and derive its properties, as outlined in Sections \ref{Section.ConvenientSampleSize} and    \ref{Section.extension}.

\normalsize
\subsection{Sample size performance according to the correlation}

\begin{lemma} \label{lemmaSS}
	Let $N(p,d) $ denote
	the sample size function for testing the difference in proportions under the unpooled variance estimate, where $p$ denotes the probability under the control group and $d$ the  relevant difference to be detected, that is:
	\begin{eqnarray}  \label{ss_unpooled} 
		N(p,d)  &=&   \left( \frac{z_\alpha + z_\beta}{d} \right)^2 \cdot \left( p \cdot (1-p) + (p+d) \cdot (1-p-d) \right) 	
	\end{eqnarray} 
	It follows that $N(p,d) $
	is an increasing function with respect to $p$ and with respect to $d$.
\end{lemma}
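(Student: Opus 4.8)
The plan is to regard $N(p,d)$ as a smooth function of two real variables on the domain that the sampling problem imposes: $0<p<1/2$ (a control event rate below one half), $d<0$ (we are detecting a risk reduction), and $p+d>0$ (so that $p+d=p_*^{(1)}$ is a genuine probability; note this forces $|d|<p$). On this set I will show that both partial derivatives of $N$ are strictly positive, which gives the claim.

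First I would clear the positive constant $C=(z_\alpha+z_\beta)^2$ and expand the bracket, writing $f(p,d)=p(1-p)+(p+d)(1-p-d)=2p+d-2p^2-2pd-d^2$, so that $N(p,d)=C\,f(p,d)/d^2$. Monotonicity in $p$ is then immediate: $\partial N/\partial p=(2C/d^2)(1-2p-d)$, and since $d^2>0$, $1-2p>0$ (because $p<1/2$) and $-d>0$, the factor $1-2p-d$ is strictly positive, so $\partial N/\partial p>0$.

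For monotonicity in $d$ I would apply the quotient rule, obtaining $\partial N/\partial d=(C/d^3)\bigl(d\,f_d-2f\bigr)$ where $f_d$ denotes $\partial f/\partial d=1-2p-2d$, and then simplify $d\,f_d-2f=-d(1-2p)-4p(1-p)$. Since $d<0$ the first term equals $|d|(1-2p)$; the constraint $|d|<p$ together with $1-2p>0$ gives $|d|(1-2p)<p(1-2p)$, and $p(1-2p)<4p(1-p)$ since this reduces to $2p<3$. Hence $d\,f_d-2f<0$, and as $d^3<0$ we conclude $\partial N/\partial d>0$.

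The computation is elementary; the one place that needs care is the sign of $d\,f_d-2f$, where the expression $|d|(1-2p)-4p(1-p)$ is a difference of a positive and a negative quantity whose sign is not obvious in isolation. The trick is to exploit that $p$ and $d$ are coupled through the probability constraint $p+d>0$ (so $|d|<p$) and then use the trivial bound $1-2p<4(1-p)$. That coupling step is what I expect to be the main obstacle; everything else is routine differentiation.
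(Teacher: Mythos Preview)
Your argument is correct and proceeds along the same lines as the paper's: compute both partial derivatives and check their signs under the standing assumptions $0<p<1/2$, $d<0$, $0<p+d$. For $\partial N/\partial p$ your computation is identical to the paper's. For $\partial N/\partial d$, however, you work harder than necessary. Rather than combining everything over the common denominator $d^{3}$ and then having to bound the mixed-sign numerator $-d(1-2p)-4p(1-p)$ via $|d|<p$, the paper keeps the derivative in the product-rule form
\[
\frac{\partial N}{\partial d}
= -\,\frac{2C}{d^{3}}\,f(p,d) \;+\; \frac{C}{d^{2}}\,(1-2p-2d),
\]
and simply observes that each summand is positive: $d^{3}<0$ and $f(p,d)=p(1-p)+(p+d)(1-p-d)>0$ make the first term positive, while $d^{2}>0$ and $1-2p-2d>0$ make the second positive. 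This avoids your coupling step entirely; the constraint $|d|<p$ is never invoked. Your route is perfectly valid, just more circuitous.
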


\begin{proof} 
	Observe that: 
	$$
	{\frac {\partial }{\partial {  p}}} N \left( {\it p},d \right) = {\frac { \left( {\it z_\alpha}+{\it z_\beta} \right) ^{2} \left( 2-4\,{\it p}-2\,d \right) }{{d}^{2}}} 
	$$
	Assuming $p < 0.5$, then $1-2p > 0$ and $2-4p0-2d > 0$. Therefore ${\frac {\partial }{\partial {  p}}} N \left( {\it p},d \right)>0$, the sample size is increasing with respect to $p$.
	Moreover,
	$$
	\frac {\partial }{\partial d}N(p,d)=
	-2\,{\frac { \left( z_\alpha+z_\beta \right) ^{2} \left(p \left( 1-p \right) + \left( d+p \right)  \left( 1-p-d \right)  \right) \\
			\mbox{}}{{d}^{3}}}+{\frac { \left(z_\alpha+z_\beta \right) ^{2} \left( 1-2p-2\,d \right) }{{d}^{2}}}
	$$
	Note that $1-2p-2d > 0$
	and therefore, $\frac {\partial }{\partial d}N(p,d)>0$; thus,   the sample size is increasing with respect to $d$.
\end{proof} 

\noindent
\textbf{Theorem 1.} Let $\theta$ and $\lambda$ be the vectors of, respectively, marginal event rates and effect sizes for the composite components, and we denote by $ \rho $ the correlation between both components.
Then,  the sample size $n(\theta, \lambda, \rho)$, for a given $\theta$ and $\lambda$ is an increasing function of the correlation $\rho$.

\begin{proof}[Proof of Theorem 1]

	Since the probability of observing the composite event is given by $\theta$ and $\rho$ (see equation \eqref{CE_prob}), and the risk difference for the composite endpoint is given by $\lambda$, $\theta$ and $\rho$ (see equation \eqref{diff.prop}), then the sample size for the composite endpoint computed by $n(\theta, \lambda, \rho)= N\left(  p_*(\theta, \rho), \delta_*(\lambda, \theta, \rho) \right)  $  is a function of $\lambda$, $\theta$ and $\rho$.
	
	To prove that the sample size for the composite endpoint  $N\left(  p_*(\theta, \rho), \delta_*(\lambda, \theta, \rho) \right)  $   increases   with $\rho$, we will show that:
	\begin{eqnarray} \label{partialN}
		\frac{\partial N( p_*(\theta, \rho), \delta_*(\lambda, \theta, \rho)) }{\partial \rho} &=&
		\frac{\partial N( p_*(\theta, \rho), \delta_*(\lambda, \theta, \rho)) }{\partial p_*(\theta, \rho)}\cdot \frac{\partial p_*(\theta, \rho)}{\partial \rho}
		+
		\frac{\partial N( p_*(\theta, \rho), \delta_*(\lambda, \theta, \rho)) }{\partial  \delta_*(\lambda, \theta, \rho)}
		\cdot \frac{\partial  \delta_*(\lambda, \theta, \rho) }{\partial \rho} >0
	\end{eqnarray}
	
	From now on we will omit $\theta$, $\lambda$ and $\rho$ and use $p_*$ and $\delta_*$ instead of   $p_*(\theta,\rho)$ and $\delta_*(\lambda, \theta, \rho)$.
	From Lemma \ref{lemmaSS}, the sample size $N(p,d)$ in \eqref{ss_unpooled} is increasing with respect to the treatment effect, $d$, and with respect to the probability of observing the event under control group, $p$, hence:
	\begin{eqnarray*}
		\frac{\partial N( p_*, \delta_*) }{\partial p_* } \ > \  0 \hspace{4mm} \text{and} \hspace{4mm} 
		\frac{\partial N( p_* , \delta_* ) }{\partial \delta_* } \ > \ 0  
	\end{eqnarray*}
	We denote by: 
	\begin{eqnarray*} 
		\hspace{-25mm}
		\frac{\partial p_*(\theta, \rho)}{\partial \rho} \ = \ -a   
		\hspace{4mm} \text{and} \hspace{4mm} 
		\frac{\partial  \delta_*(\rho) }{\partial \rho}  \ = \ a -b  
	\end{eqnarray*}
	where  $a, b>0$ and, from Theorem \ref{thm_diff}, $a-b>0$.
	Then we have:
	\begin{equation*} 
		\begin{aligned}
			\frac{\partial N( p_*, \delta_*) }{\partial \rho} &= \left(a-b\right)  
			\cdot \left( -2\,{\frac { \left( {\it z_\alpha}+{\it z_\beta} \right) ^{2} \left( p_* \, \left( 1-p_*(\theta,\rho) \right)  
					+ \left( \delta_*+p_*  \right)  \left( 1-p_* -\delta_* \right)  \right)  }{{\delta_*}^{3}}} \right.
			\\
			&\left.+{\frac { \left( {\it z_\alpha}+{\it z_\beta} \right) ^{2} \left( 1-2\,p_* -2\delta_* \right) }{{\delta_*}^{2}}} \right) 
			-a   \cdot\frac { \left( {\it z_\alpha}+{\it z_\beta} \right) ^{2} \left( 2-4\,p_* -2\delta_* \right)  }{{\delta_*}^{2}}  
		\end{aligned}
	\end{equation*}	
	and this is positive if and only if:
	
	\begin{equation} \label{eq_larger1} 
		\begin{aligned}
			&\left(a-b\right)
			\cdot \left( -2 {\frac {p_*\, \left( 1-p_* \right) + \left( \delta_*+p_* \right)  \left( 1-p_*-\delta_* \right) }{\delta_*}}+1-2 p_*-2\delta_* \right) - \left( 2-4\,p_*-2\delta_* \right) \cdot  a 
		\end{aligned}
	\end{equation}
	\eqref{eq_larger1} is positive.	Then we have:
	\begin{equation}  \label{eq_larger2} 
		\begin{aligned}
			& - 2\frac{\left(a-b \right)}{\delta_*}    \left(  p_*  \left( 1-p_* \right) +  \left( \delta_*+p_* \right)  \left( 1-p_*-d \right)  \right)  - b \left( 1-2\,p_*-2\delta_* \right)  
			\mbox{}+ a \left( -1+2\,p_* \right)\\
			& > 
			- 2\frac{\left(a-b \right)}{\delta_*}    p_*  \left( 1-p_* \right) - 2\frac{\left(a-b \right)}{\delta_*}   \left( \delta_*+p_* \right)  \left( 1-p_*-\delta_*\right)     - 2 a \left( 1 - p_*- \delta_* \right)  +2ap_*
		\end{aligned}
	\end{equation}
	Then \eqref{eq_larger1}  > \eqref{eq_larger2}, because  $a>b$.
	Note that the first and forth terms are positive, so we end if we see that the second plus third are also positive. This follows from the fact that:
	\begin{eqnarray*}
		- 2\frac{\left(a-b \right)}{\delta_*}   \left( \delta_*+p_* \right)    - 2 a    >  0 &\Leftrightarrow &
		a\left( 1+ \frac{\delta_*+p_*}{\delta_*}\right)  <  b\left(   \frac{\delta_*+p_*}{\delta_*}\right)		 \Leftrightarrow 
		\frac{a}{b} > \frac{    \frac{\delta_*+p_*}{\delta_*}}{1+   \frac{\delta_*+p_*}{\delta_*}}
	\end{eqnarray*}
	Since $a,b>0$ and $a-b>0$, we have $ \frac{a}{b}>1$; and since 
	$\left( \frac{\delta_*+p_*}{\delta_*} \right)$,\ 
	$\left(  1+   \frac{\delta_*+p_*}{\delta_*}\right)<0$, we have 
	$\left( \frac{\delta_*+p_*}{\delta_*} \right) \big/  \left(  1+   \frac{\delta_*+p_*}{\delta_*}\right)  \in (0,1)$. Therefore \eqref{partialN} is positive, as we intended to prove.
\end{proof}

\newpage


\clearpage

%

\end{document}